\documentclass[a4paper,11pt]{article}
\pdfoutput=1


\usepackage[margin=1in]{geometry}
\usepackage[utf8]{inputenc}
\usepackage[T1]{fontenc}
\usepackage[backend=biber,maxbibnames=10,maxcitenames=4,maxalphanames=4,style=alphabetic,isbn=false,giveninits=true]{biblatex}
\addbibresource{../Supertableaux.bib}
\usepackage{hyperref}
\usepackage{color}
\usepackage{amsmath,amssymb,amsthm}
\usepackage{epsfig}
\usepackage{graphicx}

\hypersetup{
	hidelinks=false,
	draft=false,
	colorlinks=false, 
	linktoc=all 
}

\definecolor{cherry}{rgb}{0.9,.1,.2}
\definecolor{orange}{rgb}{1,0.5,0}

\usepackage{tikz}
\usetikzlibrary{backgrounds,calc,positioning,decorations.pathreplacing,decorations.markings,through,shapes}

\numberwithin{equation}{section}

\usepackage{comment} 
\includecomment{note}
\specialcomment{note}
    {\begingroup\color{blue}}{\endgroup} 
\includecomment{anne}
\specialcomment{anne}
    {\begingroup\color{orange}}{\endgroup} 
\excludecomment{note}

\usepackage{dsfont} 
\usepackage{cancel} 
\usepackage{soul} 
\usepackage{calc}
\usepackage{xifthen} 

\usepackage[nameinlink]{cleveref} 

\usepackage{thmtools}

\usepackage{braket}

\usepackage{comment} 
\includecomment{note}
\specialcomment{note}
    {\begingroup\color{blue}}{\endgroup} 
\includecomment{anne}
\specialcomment{anne}
    {\begingroup\color{orange}}{\endgroup} 
\excludecomment{note}

\usepackage{subcaption} 

\usepackage{authblk}



\declaretheorem[
style=plain,
name=Theorem,
numberwithin=section
]{theorem}
\declaretheorem[
style=plain,
name=Proposition,
numberlike=theorem
]{proposition}
\declaretheorem[
style=plain,
name=Lemma,
numberlike=theorem
]{lemma}
\declaretheorem[
style=plain,
name=Corollary,
numberlike=theorem
]{corollary}
\declaretheorem[
style=definition,
name=Definition,
numberlike=theorem
]{definition}
\declaretheorem[
style=definition,
name=Example,
numberlike=theorem,
qed=$\triangle$
]{example}

\let\deg\relax
\let\u\relax

\newcommand{\Om}[1]{\ensuremath{\Omega_{#1}}}
\newcommand{\N}[1]{\ensuremath{\mathcal{N}=#1}}
\newcommand{\CBI}{\ensuremath{\mathbb{C}B_I}}

\newcommand{\CB}[1]{\ensuremath{\mathbb{C}B_{I,#1}}}
\newcommand{\RB}[1]{\ensuremath{\mathbb{R}B_{I,#1}}}
\newcommand{\su}[1]{\ensuremath{\mathfrak{su}(#1)}}
\newcommand{\sut}[2][]{\ensuremath{\widehat{\su{2}^{#2}_{#1}}}}
\newcommand{\u}[1]{\ensuremath{\mathfrak{u}(#1)}}
\newcommand{\Ag}{\ensuremath{{A_\gamma}}}
\newcommand{\Agt}{\ensuremath{{\tilde{A}_\gamma}}}
\newcommand{\Aqu}{\ensuremath{{A_{QU}}}}
\renewcommand{\th}{\textsuperscript{th}}

\DeclareMathOperator{\Tr}{Tr}
\DeclareMathOperator{\STr}{STr}

\DeclareMathOperator{\Ch}{Ch}
\DeclareMathOperator{\SCh}{SCh}

\DeclareMathOperator{\SDet}{SDet}
\DeclareMathOperator{\deg}{deg}

\makeatletter
\newcommand{\syt}[2][0.5]{
		\raisebox{0.5ex}{\begin{tikzpicture}[scale=#1,baseline={(current bounding box.center)}]
			\@ifundefined{c@rownumber}
			{
				\newcounter{rownumber}
			}
			{
				\setcounter{rownumber}{0}
			}
			\foreach \row in #2
			{
				\ifnum\row=1
				{
					\draw (1,-\value{rownumber}) rectangle (2,-\value{rownumber}+1);
					\draw (1,-\value{rownumber}) -- (2,-\value{rownumber}+1);
				}
				\else
				{
					\foreach \x in {1,2,...,\row}
					{
						\draw (\x,-\value{rownumber}) rectangle (\x + 1,-\value{rownumber}+1);
						\draw (\x,-\value{rownumber}) -- (\x + 1,-\value{rownumber}+1);
					}
				}
				\fi
				\stepcounter{rownumber}
			}
			\useasboundingbox ([shift={(2mm,2mm)}]current bounding box.north east) rectangle ([shift={(-2mm,-2mm)}]current bounding box.south west);
		\end{tikzpicture}}
}

\newcommand{\csyt}[2][0.5]{
		\raisebox{0.5ex}{\begin{tikzpicture}[scale=#1,baseline={(current bounding box.center)}]
			\@ifundefined{c@rownumber}
			{
				\newcounter{rownumber}
			}
			{
				\setcounter{rownumber}{0}
			}
			\foreach \row in #2
			{
				\ifnum\row=1
				{
					\draw (1,-\value{rownumber}) rectangle (2,-\value{rownumber}+1);
					\draw (1,-\value{rownumber}) -- (2,-\value{rownumber}+1);
					\filldraw[fill=black] (1.5,-\value{rownumber}+0.5) circle (0.075cm);
				}
				\else
				{
					\foreach \x in {1,2,...,\row}
					{
						\draw (\x,-\value{rownumber}) rectangle (\x + 1,-\value{rownumber}+1);
						\draw (\x,-\value{rownumber}) -- (\x + 1,-\value{rownumber}+1);
						\filldraw[fill=black] (\x + 0.5,-\value{rownumber}+0.5) circle (0.075cm);
					}
				}
				\fi
				\stepcounter{rownumber}
			}
			\useasboundingbox ([shift={(2mm,2mm)}]current bounding box.north east) rectangle ([shift={(-2mm,-2mm)}]current bounding box.south west);
		\end{tikzpicture}}
}

\newcommand{\yt}[2][0.5]{
		\raisebox{0.5ex}{\begin{tikzpicture}[scale=#1,baseline={(current bounding box.center)}]
			\@ifundefined{c@rownumber}
			{
				\newcounter{rownumber}
			}
			{
				\setcounter{rownumber}{0}
			}
			\foreach \row in #2
			{
				\ifnum\row=1
				{
					\draw (1,-\value{rownumber}) rectangle (2,-\value{rownumber}+1);
				}
				\else
				{
					\foreach \x in {1,2,...,\row}
					{
						\draw (\x,-\value{rownumber}) rectangle (\x + 1,-\value{rownumber}+1);
					}
				}
				\fi
				\stepcounter{rownumber}
			}
			\useasboundingbox ([shift={(2mm,2mm)}]current bounding box.north east) rectangle ([shift={(-2mm,-2mm)}]current bounding box.south west);
		\end{tikzpicture}}
}

\newcommand{\nsymsyt}[2][0.5]{
	\ensuremath{
	\raisebox{0.5ex}{\begin{tikzpicture}[scale=#1,baseline={(0,0.25)}]
		\draw (0,0) rectangle (1,1);
		\draw (1,0) rectangle (3,1);
		\draw (3,0) rectangle (4,1);
		\draw (0,0) -- (1,1);
		\draw (3,0) -- (4,1);
		\draw [thick,decorate,decoration={brace,amplitude=5pt},yshift=4pt] (0,1)  -- (4,1) 
		   node [black,midway,above,yshift=3pt] {\scriptsize $#2$};
	\end{tikzpicture}}
	}
}

\newcommand{\nsymyt}[2][0.5]{
	\ensuremath{
	\raisebox{0.5ex}{\begin{tikzpicture}[scale=#1,baseline={(0,0.25)}]
		\draw (0,0) rectangle (1,1);
		\draw (1,0) rectangle (3,1);
		\draw (3,0) rectangle (4,1);
		\draw [thick,decorate,decoration={brace,amplitude=5pt},yshift=4pt] (0,1)  -- (4,1) 
		   node [black,midway,above,yshift=3pt] {\scriptsize $#2$};
	\end{tikzpicture}}
	}
}

\newcommand{\msymnantisyt}[3][0.5]{
	\ensuremath{
	\raisebox{0.5ex}{\begin{tikzpicture}[scale=#1,baseline={(current bounding box.center)}]
		\draw (0,0) rectangle (1,1);
		\draw (1,0) rectangle (3,1);
		\draw (3,0) rectangle (4,1);
		\draw (0,-2) rectangle (1,0);
		\draw (0,-3) rectangle (1,-2);
		\draw (0,0) -- (1,1);
		\draw (3,0) -- (4,1);
		\draw (0,-3) -- (1,-2);
		\draw [thick,decorate,decoration={brace,amplitude=5pt},yshift=4pt] (0,1)  -- (4,1) 
		   node [black,midway,above,yshift=3pt] {\scriptsize $#2$};
   		\draw [thick,decorate,decoration={brace,amplitude=5pt},xshift=-4pt] (0,-3) -- (0,1)
   		   node [black,midway,left,xshift=-3pt] {\scriptsize $#3$};
	\end{tikzpicture}}
	}
}

\newcommand{\nonmaxeccsyt}[3][0.5]{
	\ensuremath{
	\raisebox{0.5ex}{\begin{tikzpicture}[scale=#1,baseline={(current bounding box.center)}]
		\draw (0,0) rectangle (1,1);
		\draw (1,0) rectangle (2,1);
		\draw (2,0) rectangle (4,1);
		\draw (4,0) rectangle (5,1);
		\draw (0,-1) rectangle (1,0);
		\draw (0,-3) rectangle (1,-1);
		\draw (0,-4) rectangle (1,-3);
		\draw (1,-1) rectangle (2,0);
		\draw (0,0) -- (1,1);
		\draw (1,0) -- (2,1);
		\draw (4,0) -- (5,1);
		\draw (0,-1) -- (1,0);
		\draw (0,-4) -- (1,-3);
		\draw (1,-1) -- (2,0);
		\draw [thick,decorate,decoration={brace,amplitude=5pt},yshift=4pt] (2,1)  -- (5,1) 
		   node [black,midway,above,yshift=3pt] {\scriptsize $#2$};
   		\draw [thick,decorate,decoration={brace,amplitude=5pt},xshift=-4pt] (0,-4) -- (0,-1)
   		   node [black,midway,left,xshift=-3pt] {\scriptsize $#3$};
	\end{tikzpicture}}
	}
}

\newcommand{\zerosupertableau}[5][0.5]{
	\ensuremath{
	\raisebox{0.5ex}{\begin{tikzpicture}[scale=#1,baseline={(current bounding box.center)}]
		\draw (0,0) rectangle (1,1);
		\draw (1,0) rectangle (2,1);
		\draw (2,0) rectangle (3,1);
		\draw (0,-1) rectangle (1,0);
		\draw (1,-1) rectangle (2,0);
		\draw (2,-1) rectangle (3,0);
		\draw (0,-2) rectangle (1,-1);
		\draw (1,-2) rectangle (2,-1);
		\draw (2,-2) rectangle (3,-1);
		\draw (3,-1) rectangle (6,1);
		\draw (0,-5) rectangle (2,-2);
		\draw (6,0) rectangle (8,1);
		\draw (8,0) rectangle (9,1);
		\draw (0,-7) rectangle (1,-5);
		\draw (0,-8) rectangle (1,-7);
		\foreach \x in {0,1,2}
			\foreach \y in {0,-1,-2}
				\draw (\x,\y) -- (\x +1,\y +1);
		\draw (8,0) -- (9,1);
		\draw (0,-8) -- (1,-7);
		\draw [thick,decorate,decoration={brace,amplitude=5pt},yshift=4pt] (3,1)  -- (6,1) 
		   node [black,midway,above,yshift=3pt] {\scriptsize $#2$};
   		\draw [thick,decorate,decoration={brace,amplitude=5pt},yshift=4pt] (6,1)  -- (9,1) 
   		   node [black,midway,above,yshift=3pt] {\scriptsize $#3$};
   		\draw [thick,decorate,decoration={brace,amplitude=5pt},xshift=-4pt] (0,-5) -- (0,-2)
   		   node [black,midway,left,xshift=-3pt] {\scriptsize $#4$};
   		\draw [thick,decorate,decoration={brace,amplitude=5pt},xshift=-4pt] (0,-8) -- (0,-5)
   		   node [black,midway,left,xshift=-3pt] {\scriptsize $#5$};
	\end{tikzpicture}}
	}
}

\newcommand{\genericsupertableau}[5][0.5]{
	\ensuremath{
	\raisebox{0.5ex}{\begin{tikzpicture}[scale=#1,baseline={(current bounding box.center)}]
	\ifthenelse{\equal{#4}{0}}{\def\singlecolstarty{-1}}{\def\singlecolstarty{-4}}
	\ifthenelse{\equal{#2}{0}}{\def\singlerowstartx{2}}{\def\singlerowstartx{5}}
		\ifthenelse{\(\equal{#4}{0} \AND \equal{#5}{0}\) \OR \(\equal{#2}{0} \AND \equal{#3}{0}\)}{\ifthenelse{\equal{#4}{0}}{
		\draw (\singlerowstartx-3,0) -- (\singlerowstartx-2,0);
		\draw (\singlerowstartx-2,-1) -- (\singlerowstartx-2,1);}{
		\draw (0,\singlecolstarty+2) -- (2,\singlecolstarty+2);
		\draw (1,\singlecolstarty+2) -- (1,\singlecolstarty+3);}}{
		\draw (0,0) rectangle (1,1);
		\draw (1,0) rectangle (2,1);
		\draw (0,-1) rectangle (1,0);
		\draw (1,-1) rectangle (2,0);}
		\ifthenelse{\equal{#2}{0}}{}{
		\draw (2,-1) rectangle (5,1);}
		\ifthenelse{\equal{#4}{0}}{}{
		\draw (0,-4) rectangle (2,-1);}
		\ifthenelse{\equal{#3}{0}}{}{
		\draw (\singlerowstartx,0) rectangle (\singlerowstartx+2,1);
		\draw (\singlerowstartx+2,0) rectangle (\singlerowstartx+3,1);}
		\ifthenelse{\equal{#5}{0}}{}{
		\draw (0,\singlecolstarty-2) rectangle (1,\singlecolstarty);
		\draw (0,\singlecolstarty-3) rectangle (1,\singlecolstarty-2);}
		\ifthenelse{\(\equal{#4}{0} \AND \equal{#5}{0}\) \OR \(\equal{#2}{0} \AND \equal{#3}{0}\)}{\ifthenelse{\equal{#4}{0}}{
		\draw (\singlerowstartx-3,-1) -- (\singlerowstartx-2,0);
		\draw (\singlerowstartx-3,0) -- (\singlerowstartx-2,1);}{
		\draw (0,\singlecolstarty+2) -- (1,\singlecolstarty+3);
		\draw (1,\singlecolstarty+2) -- (2,\singlecolstarty+3);}}{
		\foreach \x in {0,1}
			\foreach \y in {0,-1}
				\draw (\x,\y) -- (\x +1,\y +1);}
		\ifthenelse{\equal{#3}{0}}{}{\draw (\singlerowstartx+2,0) -- (\singlerowstartx+3,1);}
		\ifthenelse{\equal{#5}{0}}{}{\draw (0,\singlecolstarty-3) -- (1,\singlecolstarty-2);}
		\ifthenelse{\equal{#2}{0}}{}{\draw [thick,decorate,decoration={brace,amplitude=5pt},yshift=4pt] (2,1)  -- (5,1) 
		   node [black,midway,above,yshift=3pt] {\scriptsize $#2$};}
   		\ifthenelse{\equal{#3}{0}}{}{\draw [thick,decorate,decoration={brace,amplitude=5pt},yshift=4pt] (\singlerowstartx,1)  -- (\singlerowstartx+3,1) 
   		   node [black,midway,above,yshift=3pt] {\scriptsize $#3$};}
   		\ifthenelse{\equal{#4}{0}}{}{\draw [thick,decorate,decoration={brace,amplitude=5pt},xshift=-4pt] (0,-4) -- (0,-1)
   		   node [black,midway,left,xshift=-3pt] {\scriptsize $#4$};}
   		\ifthenelse{\equal{#5}{0}}{}{\draw [thick,decorate,decoration={brace,amplitude=5pt},xshift=-4pt] (0,\singlecolstarty-3) -- (0,\singlecolstarty)
   		   node [black,midway,left,xshift=-3pt] {\scriptsize $#5$};}
	\end{tikzpicture}}
	}
}

\newcommand{\generictableau}[5][0.5]{
	\ensuremath{
	\raisebox{0.5ex}{\begin{tikzpicture}[scale=#1,baseline={(current bounding box.center)}]
	\ifthenelse{\equal{#4}{0}}{\def\singlecolstarty{-1}}{\def\singlecolstarty{-4}}
	\ifthenelse{\equal{#2}{0}}{\def\singlerowstartx{2}}{\def\singlerowstartx{5}}
		\ifthenelse{\(\equal{#4}{0} \AND \equal{#5}{0}\) \OR \(\equal{#2}{0} \AND \equal{#3}{0}\)}{\ifthenelse{\equal{#4}{0}}{
		\draw (\singlerowstartx-3,0) -- (\singlerowstartx-2,0);
		\draw (\singlerowstartx-2,-1) -- (\singlerowstartx-2,1);}{
		\draw (0,\singlecolstarty+2) -- (2,\singlecolstarty+2);
		\draw (1,\singlecolstarty+2) -- (1,\singlecolstarty+3);}}{
		\draw (0,0) rectangle (1,1);
		\draw (1,0) rectangle (2,1);
		\draw (0,-1) rectangle (1,0);
		\draw (1,-1) rectangle (2,0);}
		\ifthenelse{\equal{#2}{0}}{}{
		\draw (2,-1) rectangle (5,1);}
		\ifthenelse{\equal{#4}{0}}{}{
		\draw (0,-4) rectangle (2,-1);}
		\ifthenelse{\equal{#3}{0}}{}{
		\draw (\singlerowstartx,0) rectangle (\singlerowstartx+2,1);
		\draw (\singlerowstartx+2,0) rectangle (\singlerowstartx+3,1);}
		\ifthenelse{\equal{#5}{0}}{}{
		\draw (0,\singlecolstarty-2) rectangle (1,\singlecolstarty);
		\draw (0,\singlecolstarty-3) rectangle (1,\singlecolstarty-2);}
		\ifthenelse{\equal{#2}{0}}{}{\draw [thick,decorate,decoration={brace,amplitude=5pt},yshift=4pt] (2,1)  -- (5,1) 
		   node [black,midway,above,yshift=3pt] {\scriptsize $#2$};}
   		\ifthenelse{\equal{#3}{0}}{}{\draw [thick,decorate,decoration={brace,amplitude=5pt},yshift=4pt] (\singlerowstartx,1)  -- (\singlerowstartx+3,1) 
   		   node [black,midway,above,yshift=3pt] {\scriptsize $#3$};}
   		\ifthenelse{\equal{#4}{0}}{}{\draw [thick,decorate,decoration={brace,amplitude=5pt},xshift=-4pt] (0,-4) -- (0,-1)
   		   node [black,midway,left,xshift=-3pt] {\scriptsize $#4$};}
   		\ifthenelse{\equal{#5}{0}}{}{\draw [thick,decorate,decoration={brace,amplitude=5pt},xshift=-4pt] (0,\singlecolstarty-3) -- (0,\singlecolstarty)
   		   node [black,midway,left,xshift=-3pt] {\scriptsize $#5$};}
	\end{tikzpicture}}
	}
}

\makeatother


\title{Young supertableaux and the large \N{4} superconformal algebra}

\author{Sam Fearn}
\affil{Department of Mathematical Sciences and Centre for Particle Theory, Durham University, Durham, DH1 3LE, U.K. }

\affil{s.m.fearn@durham.ac.uk}

%
%



\begin{document}
\maketitle

\begin{abstract}
In this paper we consider unitary highest weight irreducible representations of the `Large' \N{4} superconformal algebra \Ag{} in the Ramond sector as infinite-dimensional graded modules of its zero mode subalgebra, \su{2|2}. We describe how representations of \su{2|2} may be classified using Young supertableaux, and use the decomposition of \Ag{} as an \su{2|2} module to discuss the states which contribute to the supersymmetric index $I_1$, previously proposed in the literature by Gukov, Martinec, Moore and Strominger.
\end{abstract}

\newpage

\tableofcontents

\section{Introduction} 
\label{sec:introduction}

In string theory, the string sweeps out a two-dimensional surface known as the string worldsheet as it evolves. The fields of bosonic string theory are maps from the worldsheet to a spacetime manifold, each field describing a different spacetime coordinate. When the worldsheet metric is fixed, such a model is known as a non-linear $\sigma$-model and is described by a two-dimensional conformal field theory. It is remarkable that the introduction of fermions in string theory naturally leads to the concept of supersymmetry, which in turn eliminates problematic tachyons from the bosonic theory. One proceeds by introducing a fermionic partner for each bosonic string coordinate, and imposes a supersymmetry on the worldsheet that transforms bosonic and fermionic degrees of freedom into each other. This leads to a superstring theory which is consistent only in 10 dimensions. In order to describe a physically more realistic theory, one may then insist that the action for some of the fields is described by a supersymmetric $\sigma$-model, whose target space geometry is external to the resulting spacetime. Interesting theories usually have more than one type of worldsheet supersymmetry, in which case they are said to possess $\mathcal{N}$-extended supersymmetry. A classification of supersymmetric $\sigma$-models was provided in \cite{Alvarez1981geometrical}, where the authors argued that \N{4} was the largest number of worldsheet supersymmetries one could impose on a two-dimensional $\sigma$-model. In particular, they showed that \N{4} extended supersymmetry occurs when the target space is hyperk\"ahler. In two complex dimensions, such a space is either a 2-tori or a $K3$ surface. Such a compactification can be described by an \N{4} superconformal field theory, the state content of which is naturally described by representations of $\mathcal{N}$-extended superconformal algebras (SCAs).

This classification was extended in \cite{Spindel1988extendedI,Sevrin1988extendedII} to consider string compactifications given by $\sigma$-models on group spaces. When the group manifold is non-abelian, \N{4} was shown remain the largest number of worldsheet supersymmetries that one can obtain. However, the SCA describing the symmetries of such compactifications was shown to have a larger field content than the previously known `small' \N{4} SCA, and is hence known in the literature as the `large' \N{4} SCA, also referred to as \Ag{}. A summary of the classification of the possible $\mathcal{N}$-extended SCAs and their representation theory is given in \cite{ramond1976classification,Sevrin1988superconformal}. In this paper we focus on the `large' \N{4} SCA, also known in the literature as \Ag{}. More precisely, \Ag{} is a one parameter family of superconformal algebras; the parameter $\gamma$ is defined in terms of the levels of the $\sut[k^+]{+} \otimes \sut[k^-]{-} \otimes \widehat{\u{1}}$ Kac-Moody subalgebra. This algebra has been considered in the context of the AdS/CFT conjecture, where the CFT dual to Type II string theory on $AdS(3) \times S^3 \times S^3 \times S^1$ is known to exhibit \Ag{} symmetry \cite{gukov2004index,gukov2005search,eberhardt2017bps,de1999ads,baggio2017protected}. The \Ag{} SCA also provides a unifying viewpoint in the context of $\mathcal{N}=4$ Liouville theory, as for two specific values of the \Ag{} central charge, corresponding to two different dilaton background charges, the theory reduces to the Coulomb branch (`short string' sector) and the Higgs branch (`long string' sector) of a string theory in an NS5-NS1 background \cite{eguchi2016duality,callansupersymmetric1991}.

The zero mode subalgebra of \Ag{} in the Ramond sector is the finite superalgebra \su{2|2} and one can therefore view a representation of \Ag{} as an infinite-dimensional graded \su{2|2} module, the grading being given by the conformal weight. Here, we describe a process for determining the \su{2|2} content which appears at a given grade in an \Ag{} representation. We do this by making use of Young supertableaux to describe the representations of \su{2|2} \cite{balantekin1981dimension,bars1982kac} and the branching of \su{2|2} to the even subalgebra $\su{2} \otimes \su{2} \otimes \u{1}$. Similar techniques have been used in \cite{heslop2003four} in the context of four-point functions in \N{4} super Yang-Mills. \Ag{} also contains the `small' \N{4} SCA as a subalgebra \cite{Sevrin1988superconformal,Petersen1990characters1}, and this smaller algebra encodes some of the symmetries of Type II strings propagating on a K3 surface. It plays a central role in revealing the Mathieu Moonshine phenomenon in this class of models. Indeed, a supersymmetric index known as the elliptic genus, when calculated in a K3 sigma model in terms of small N=4 characters, gives the dimension of a graded module of the sporadic group Mathieu 24 \cite{eguchi2011notes,gannon2016much}. It is therefore a natural question to ask whether the larger \Ag{} algebra also exhibits some moonshine phenomenon, and the present paper should be viewed as a preliminary to address this question. 

The elliptic genus is not an index that would be helpful in identifying such a phenomenon for theories with \Ag{} symmetry however, as this index is identically zero, even for short representations of \Ag{}. Interestingly, another index may be used in such theories, namely the index $I_1$ introduced by Gukov, Martinec, Moore and Strominger \cite{gukov2004index}, which in turn generalises earlier indices appearing the literature \cite{cecotti1992new,maldacena1999counting}. Unlike the elliptic genus which only counts right moving ground states, the index $I_1$ counts right moving states from throughout the representations of \Ag{}. We will discuss this further in a later paper, but here we show how Young supertableaux can be used to classify the \su{2|2} representations in which states contributing to $I_1$ are contained.

The structure of this paper is the following: firstly we show that the zero mode algebra of \Ag{} in the Ramond sector is described by the finite superalgebra \su{2|2}; next we describe how certain representations of \su{2|2} may be classified using Young supertableaux and how to compute the branching relations of \su{2|2} to the even (bosonic) subalgebra $\su{2} \otimes \su{2} \otimes \u{1}$ in terms of Young supertableaux; finally we use the branching of \su{2|2} to $\su{2} \otimes \su{2} \otimes \u{1}$ to describe how one may go about decomposing an \Ag{} module as a graded \su{2|2} module and furthermore we use this decomposition to study which representations of \su{2|2} contribute to the new index $I_1$. 


\section{The zero mode subalgebra of \Ag{} in the Ramond sector} 
\label{sec:the_zero_mode_subalgebra_of_ag_in_the_ramond_sector}

The `large' \N{4} algebra known as \Ag{} \cite{Sevrin1988extendedII,Sevrin1988superconformal,schoutens1988n,ivanov1988new}, is a superconformal algebra which, besides the energy-momentum operator $T(z)$ of conformal dimension 2, contains four supercurrents $G^a(z)$ of dimension $\frac{3}{2}$, seven operators of dimension 1 which form an $\sut[k^+]{+} \otimes \sut[k^-]{-} \otimes \widehat{\u{1}}$ Kac-Moody superalgebra and four operators of dimension $\frac{1}{2}$ \cite{Sevrin1988superconformal}. The levels of the Kac-Moody subalgebras appear most naturally in the algebra in terms of the quantities $\gamma = k^- / (k^- + k^+)$ and the central charge $c = 6k\gamma(1-\gamma)$. The commutation relations for the algebra are presented in \cite{Petersen1990characters1}, whose conventions we follow here. Unitary highest weight representations of \Ag{} were studied in \cite{Gunaydin1989unitary}, with character formulae first given in \cite{Petersen1990characters1,Petersen1990characters2}; we reproduce the character formulae in \cref{sec:character_formulae_for_ag} for convenience, as well as briefly discussing a few key points about the representation theory of \Ag{}.

In this section, we show that the zero mode subalgebra of \Ag{} in the Ramond sector is described by the Lie superalgebra \su{2|2}. Note that in the case of the NS sector, the finite (super) subalgebra is the sum of the finite superalgebra $D(2|1;\alpha)$ and a $\u{1}$ \cite{Sevrin1988superconformal}, where $\alpha=\frac{\gamma}{1-\gamma}$. Since we are interested in investigating the contributions to the index $I_1$, which is defined in the Ramond sector (see \cref{sub:the_supersymmetric_index_i_1_for_ag}), we will not consider the Neveu-Schwarz sector further in this paper. However, \Ag{} is known to exhibit a spectral flow isomorphism \cite{defever1988moding} which can be used to relate the NS and R sectors (and is used in \cref{sub:spectral_flow_orbits} to discuss orbits of states under spectral flow) and the results from this paper can therefore be translated to the NS sector by spectral flow.

\subsection[From $SU(2|2)$ to \su{2|2}]{From the Lie supergroup $SU(2|2)$ to the Lie superalgebra $\su{2|2}$} 
\label{sub:from_the_supergroup_su_m_n_to_the_lie_superalgebra_mathfrak_su_m_n}

We avoid going into detail about the general structure of Lie supergroups and their associated algebras as this is already well discussed in the literature, for example see \cite{cornwell1989group} whose notations we use. Here, we show how to obtain first the `super' Lie algebra associated to the supergroup $SU(2|2)$ and then the Lie superalgebra $\su{2|2}$ from this Lie algebra.

An element of the supergroup $SU(2|2)$ is a $(2/2) \times (2/2)$ even square supermatrix with block form
\begin{equation}\label{eq:supermatrix}
	G = \left(\begin{array}{c|c}
		A & B \\
		\hline
		C & D
	\end{array}\right),
\end{equation}
satisfying
\begin{equation}
		G^\ddagger G = \mathbb{I}_{2+2},\qquad \SDet{G} = 1_{\CBI{}},
\end{equation}
where $\ddagger$ denotes the super-adjoint \cite{cornwell1989group}. We use \CBI{} to denote the complex Grassmann superalgebra of dimension $2^I$ generated by elements $\omega_i$ for $i \in \{1,\ldots, I\}$ and denote the even and odd parts of the superalgebra as \CB{0} and \CB{1} respectively. As a supermatrix, the $2\times 2$ block matrices $A$ and $D$ have their elements in \CB{0} and the $2\times 2$ block matrices $B$ and $C$ have their elements in \CB{1}.

\begin{proposition}\label{pro:sunmsuperliealgebraconditions}
	The defining relations of the real `super' Lie algebra of $SU(M/N)$ are
	\begin{equation}
		g^\ddagger + g = 0_{GL_{M/N}(\CBI)},\qquad \STr g = 0_{\CBI{}}.
	\end{equation}
\end{proposition}

The proof of this is standard \cite{cornwell1989group}.

The elements of the subblocks $A$ and $B$ are elements of $\mathbb{C}B_{I,0}$, hence we can split these matrices into their real and imaginary Grassmann parts as	$A = A_r + i A_i$, where now $A_r$ and $A_i$ are matrices whose matrix elements are elements of $\mathbb{R}B_{I,0}$. We will use equivalent notation for the real and imaginary parts of $B,C$ and $D$. The conditions of \ref{pro:sunmsuperliealgebraconditions} are easily shown to imply that $A_r$ and $D_r$ are antisymmetric, and $A_i$ and $D_i$ are symmetric, with the traceless condition, $\Tr A_i = \Tr D_i$. Furthermore, the `odd' matrices $B$ and $C$ are required to satisfy	$B_r^t = C_i,\, B_i^t = C_r$. We can therefore write a general element $g$ of the `super' Lie algebra as
\begin{equation}\label{eq:su22generalelement}
	g = \left[
		\begin{array}{c c|c c}
		iX^1 & X^2 + iX^3 & \Theta^1 + i \Theta^2 & \Theta^3 + i \Theta^4 \\ 
		- X^2 + iX^3 & iX^4 & \Theta^5 + i \Theta^6 & \Theta^7 + i \Theta^8 \\
		\hline
		\Theta^2 + i\Theta^1 & \Theta^6 +i\Theta^5 & iX^5 & X^6 + iX^7 \\
		\Theta^4 + i\Theta^3 & \Theta^8 +i\Theta^7 & -X^6 + iX^7 & iX^1 + iX^4 -iX^5
		\end{array}
		\right],
\end{equation}
where $X^i \in \RB{0}$ and $\Theta^i \in \RB{1}$.

The generators for the `super' Lie algebra are therefore given as
\begin{equation}\label{eq:su22generators}
		M^i=g|_{X^j=\epsilon_\phi\delta_{i,j},\ \Theta^j=0},\quad N^i=g|_{X^j=0,\ \Theta^j=\epsilon_\phi\delta_{i,j}},	
\end{equation}
where $M$ and $N$ refer to even and odd generators respectively, and $\epsilon_\phi$ is the (even) identity element of \CBI{}. Note that the generators $N^i$ do not satisfy the condition given in \cref{pro:sunmsuperliealgebraconditions}, but the combination $\Theta^i N^i$ does indeed satisfy this condition for $1\le i \le 8$.

From the `super' Lie algebra, we wish to construct the Lie superalgebra \su{2|2}.

\begin{definition}\label{def:su22}
	We can now define the \emph{Lie superalgebra} \su{2|2}.	If we let
	\begin{equation}
		M^i = \epsilon_\phi m^i, \qquad N^i = \epsilon_\phi n^i, 
	\end{equation}
	for $M^i,N^i$ as in \cref{eq:su22generators}, then the complex matrices $m^i,n^i$ are the generators of a real Lie superalgebra, \su{2|2}.
\end{definition}

\subsection{An \su{2|2} basis satisfying the \Ag{} zero mode algebra} 
\label{sub:a_basis_for_mathfrak_su_n_m}
As described in \cref{sub:from_the_supergroup_su_m_n_to_the_lie_superalgebra_mathfrak_su_m_n}, \su{2|2} is a real Lie superalgebra, with the even and basis elements given by the $m^i$ and $n^i$ of \cref{def:su22} respectively. That is, a general element of the superalgebra can be written as
\begin{equation*}
	g = \sum_{i=1}^7 \alpha^i m^i + \sum_{i=1}^8 \beta^i n^i,
\end{equation*}
for real numbers $\alpha^i,\beta^i$ and square complex supertraceless matrices $m^i, n^i$. 

The goal of this subsection will be to show that (the complexification of) this superalgebra is isomorphic to the zero mode algebra of \Ag{} in the Ramond sector. We will argue this in two ways, first by appealing to structure theorems of simple Lie superalgebras and the classification of such algebras \cite{kac1977lie}. We also construct the isomorphism explicitly, by changing basis in \su{2|2} such that the new basis satisfies the commutation relations of \Ag{} \cite{Petersen1990characters1}. Since we therefore write elements of \Ag{} as four by four square matrices, that is we take our elements of \su{2|2} to be given by the fundamental representation, this clearly gives a representation of \Ag{} and we will see that it is the representation with $l^+=l^- = \frac{1}{2}$. In general, one could start with a representation of \su{2|2} other than the fundamental in order to construct a representation of \Ag{} with $l^+,l^- \ne \frac{1}{2}$. For details on the representation theory of \Ag{} see \cite{Gunaydin1989unitary,Petersen1990characters1,Petersen1990characters2}, or \cref{sec:character_formulae_for_ag} for a brief overview.

If we denote the zero mode algebra of \Ag{} in the Ramond sector as $\Ag_0$, then we can immediately see that $\Ag_0$ is the direct sum of the one dimensional abelian Lie (super)algebra generated by the zero mode of the Virasoro subalgebra of \Ag{}, $L_0$ -- or the zero mode of the $\widehat{\u{1}}$ subalgebra of \Ag{}, $U_0$ which is linearly dependent with $L_0$ -- and a \emph{simple} Lie superalgebra
\begin{equation}
	\Ag_0 = L \oplus A,
\end{equation}
where we have denoted the abelian Lie algebra generated by $L_0$ as $L$ and the simple Lie superalgebra as $A$. By a simple Lie superalgebra, we mean that $A$ does not contain a $\mathbb{Z}_2$-graded ideal. This simple Lie superalgebra $A$ is a classical Type I complex superalgebra, which means the representation of the even part of the algebra $A_0$ on the odd part $A_1$ -- formed by letting $A_0$ act on $A_1$ through the adjoint action -- is the direct sum of two irreducible representations of $A_0$. This is clear by considering the commutation relations of \Ag{}, as $A_0$ is the direct sum of the two \su{2} algebras, and the $Q^a$ and $G^a$ zero modes of $A_1$ both transform as four dimensional irreducible representations of $\su{2} \oplus \su{2}$. $A$ is therefore a classical complex simple Lie superalgebra of rank 2, the Cartan subalgebra being generated by $T^{\pm 3}_0$. Considering the classification of simple superalgebras \cite{kac1977lie}, we see that there are four families of Type 1 superalgebras, the families known as
\begin{equation*}
	\begin{aligned}
		A(r|s)&, \quad r>s \ge 0,\\
		C(s)&, \quad s \ge 2,
	\end{aligned}\qquad
	\begin{aligned}
		A(r|r)&, \quad r\ge1,\\
		C(r)&, \quad r \ge 2.
	\end{aligned}
\end{equation*}
If we consider the family members of rank 2, we find that $A(1|0)$ has a 3 dimensional even subalgebra, $C(2)$ has a four dimensional even subalgebra, $P(2)$ has an 8 dimensional even subalgebra and $A(1|1)$ has a 6 dimensional subalgebra. On dimensional grounds we therefore see that $A$ must be isomorphic to $A(1|1)$. $A(1|1)$ has a real form given by the quotient of \su{2|2} by the one dimensional ideal generated by the identity $I_4$ and hence $\Ag_0$ is isomorphic to the complexification of \su{2|2} as claimed.

We now construct the isomorphism between the zero modes of \Ag{} and \su{2|2} explicitly. Since we are trying to construct a matrix representation of \Ag{}, writing the generators in terms of the $m^i$ and $n^i$ of \cref{def:su22}, we see that $L$ and $U$ have to be scalar multiples of the identity. By definition, $L$ acts on the highest weight state of the representation as multiplication by the conformal dimension of the representation, $h$. Therefore we necessarily have $L=h\mathds{1}_4$. Similarly, $U$ acts on the highest weight state as multiplication by $-iu$, so $U=-iu\,\mathds{1}_4$.

In terms of the \su{2|2} generators $m^i$ (using \cref{eq:su22generators,def:su22}), we can write the identity as
\begin{equation}
	\mathds{1}_4 = i (m^1 + m^4 + m^5),
\end{equation}
and hence we find
\begin{equation}\label{eq:u1subgroupofagamma}
		L = hi (m^1 + m^4 + m^5), \qquad U = u (m^1 + m^4 + m^5).
\end{equation}

Identifying the remaining bosonic generators is also straightforward. Since we are constructing a four-dimensional representation of \Ag{} (using four-dimensional matrices) and the smallest non-trivial representation of $\su{2}$ is the fundamental two-dimensional representation, the two orthogonal $\su{2}$s must both be two-dimensional representations. Recalling that the even elements are represented only in blocks $A$ and $D$ in the sense of \cref{eq:su22generators}, to ensure orthogonality and without loss of generality we will assume that $\su{2}^+$ is represented in submatrix $A$ and that $\su{2}^-$ is represented in submatrix $D$. As is well known, the two-dimensional representation of $\su{2}$ can be constructed using the Pauli matrices as
\begin{equation}\label{eq:su2generators}
		T^\pm = \sigma^\pm, \qquad T^3 = \frac{1}{2} \sigma_3,
\end{equation}
where $\sigma^\pm:=\frac{1}{2}(\sigma_1 \pm i \sigma_2)$.

We can therefore represent $\su{2}^\pm$ as
\begin{equation}\label{eq:su2pmgenerators}
	\begin{aligned}
		T^{+\pm} &= \left(
		\begin{array}{c|c}
			T^\pm & 0 \\
			\hline
			0 & 0
		\end{array}
		\right),
		\quad
		T^{+3} &= \left(
		\begin{array}{c|c}
			T^3 & 0 \\
			\hline
			0 & 0
		\end{array}
		\right),\\
		T^{-\pm} &= \left(
		\begin{array}{c|c}
			0 & 0 \\
			\hline
			0 & T^\pm
		\end{array}
		\right),
		\quad
		T^{-3} &= \left(
		\begin{array}{c|c}
			0 & 0 \\
			\hline
			0 & T^3
		\end{array}
		\right),
	\end{aligned}
\end{equation}
where $T^\pm,T^3$ are as in \cref{eq:su2generators}.

In terms of the \su{2|2} generators, we therefore have
\begin{equation}\label{eq:su2xsu2subalgebraofagamma}
	\begin{aligned}
		T^{++} &= \frac{1}{2} (m^2 - im^3), \quad T^{+-} = \frac{1}{2} (m^2 + im^3), \quad T^{+3} = \frac{-i}{2} (m^1 - m^4),\\
		T^{-+} &= \frac{1}{2} (m^6 - im^7), \quad T^{--} = \frac{1}{2} (m^6 + im^7), \quad T^{-3} = \frac{-i}{2} (m^5),
	\end{aligned}
\end{equation}

With the bosonic generators identified, knowing that the fermionic generators have entries only in submatrices $B$ and $C$, we can deduce the form of the fermionic generators using the commutation relations of \Ag{}. The relations between $T^{\pm3}$ and each of the $Q_a$ for $a \in \{\pm,\pm K\}$ can be used to reduce each of the $Q_a$ to only 2 degrees of freedom (DOF). Next, the various relations between $T^{\pm+}$ and the $Q_a$, as well as $T^{\pm-}$ and the $Q_a$ can be used to show that there can be only be a maximum of 2 DOF in total among all the $Q_a$. Finally, the relations $\{Q_+,Q_-\} = \{Q_{+K},Q_{-K}\} = -\frac{k}{4}I$ show that there is only a single DOF for all the $Q_a$. Introducing $\sigma^{\pm 3}:=\frac{1}{2}(\sigma_3\pm\mathds{1}_2)$, we find
\begin{align*}
	Q_+ &= \left(
	\begin{array}{cc}
		0 & \frac{-k}{4q}\sigma^+ \\
		q \sigma^+ & 0 \\
	\end{array}
	\right),
	\qquad
	\quad \ Q_- = \left(
	\begin{array}{cc}
		0 & \frac{-k}{4q}\sigma^- \\
		q \sigma^- & 0 \\
	\end{array}
	\right) \\
	Q_{+K} &=\left(
	\begin{array}{cc}
		0 & \frac{-k}{4q}\sigma^{+3} \\
		q \sigma^{-3} & 0 \\
	\end{array}
	\right),
	\qquad
	Q_{-K} =\left(
	\begin{array}{cc}
		0 & \frac{-k}{4q}\sigma^{-3} \\
		q \sigma^{+3} & 0 \\
	\end{array}
	\right),
\end{align*}
in terms of the one remaining DOF which we have now called $q$.

Similarly, the relations between the two $\su{2}$s and the $G_a$ for $a \in \{\pm,\pm K\}$ show that the $G_a$ are of the form
\begin{align*}
	G_+ &=\left(
	\begin{array}{cc}
		0 & \frac{h-c/24}{g}\sigma^{+} \\
		g \sigma^{+} & 0 \\
	\end{array}
	\right),
	\qquad
	\quad\ G_- =\left(
	\begin{array}{cc}
		0 & \frac{h-c/24}{g}\sigma^{-} \\
		g \sigma^{-} & 0 \\
	\end{array}
	\right), \\
	G_{+K} &=\left(
	\begin{array}{cc}
		0 & \frac{h-c/24}{g}\sigma^{+3} \\
		g \sigma^{-3} & 0 \\
	\end{array}
	\right),
	\qquad
	G_{-K} =\left(
	\begin{array}{cc}
		0 & \frac{h-c/24}{g}\sigma^{-3} \\
		g \sigma^{+3} & 0 \\
	\end{array}
	\right),
\end{align*}
in terms of one DOF $g$.

Finally, the relations between the $Q_a$ and $G_{\tilde{a}}$ can be used to show that the two DOF are related as $g=\frac{2q}{k}(\frac{1}{2}+iu)$ and that the representation of \Ag{} must satisfy the massless requirement $k(h-\frac{c}{24})=u^2 + \frac{1}{4}$ \cite{Gunaydin1989unitary}. Note that since we are representing the two $\su{2}$s as doublets, we have $l^+ = l^- = \frac{1}{2}$. Following the notation of \cite{Petersen1990characters1}, our four basis states are therefore $\ket{\Omega_+},G_{-}\ket{\Omega_+},G_{-K}\ket{\Omega_+}$ and $G_{-}G_{-K}\ket{\Omega_+}$, where $\ket{\Omega_+}$ is the `highest weight state'. Since we have a massless representation of \Ag{}, the other `highest weight state' $\ket{\Om{-}}$ is given by $G_{-K}\ket{\Omega_+}$.

Since $\ket{\Omega_+}$ is the highest weight state, and furthermore an \sut{-} singlet, we require
\begin{equation*}
	T^{++}\ket{\Omega_+}=T^{-+}\ket{\Omega_+}=T^{--}\ket{\Omega_+}=G_{a}\ket{\Omega_+}=Q_{a}\ket{\Omega_+}=0,
\end{equation*}
for $a \in \{+, +K \}$. This requires
\begin{equation}
	\ket{\Omega_+} = \left( 1,0,0,0 \right)^t.
\end{equation}
Similarly,
\begin{equation*}
	T^{++}\ket{\Omega_-}=T^{+-}\ket{\Omega_-}=T^{-+}\ket{\Omega_-}=G_{+}\ket{\Omega_-}=G_{-K}\ket{\Omega_-}=0,
\end{equation*}
and therefore
\begin{equation}
	\ket{\Omega_-} = G_{-K}\ket{\Omega_+} = \left( 0,0,1,0 \right)^t.
\end{equation}
Solving this equation, in terms of the matrix representation of $G_{-K}$ that we have constructed, requires us to fix $g=1$ and so our representation is now fully determined in terms of the representation labels of \Ag{}.

The odd elements of \Ag{} (in the $l^\pm = \frac{1}{2}$ massless representation) can therefore be written in terms of \su{2|2} generators as
\begin{equation}\label{eq:fermionicsectorofAgammaintermsofsu22generators}
	\begin{aligned}
		Q_+ &= \frac{-q}{2}(n^6 - in^5) - \frac{k}{8q}(n^3 - in^4),\\
		Q_- &= \frac{-q}{2}(n^4 - in^3) - \frac{k}{8q}(n^5 - in^6),\\
		Q_{+K} &= \frac{q}{2}(n^8 - in^7) - \frac{k}{8q}(n^1 - in^2),\\
		Q_{-K} &= \frac{-q}{2}(n^2 + in^1) + \frac{k}{8q}(n^7 - in^8),
	\end{aligned}\quad
	\begin{aligned}
		G_+ &= \frac{-1}{2}(n^6 - in^5) + \hat{h}(n^3 - in^4),\\
		G_- &= \frac{-1}{2}(n^4 - in^3) + \hat{h}(n^5 - in^6),\\
		G_{+K} &= \frac{1}{2}(n^8 - in^7) + \hat{h}(n^1 - in^2),\\
		G_{-K} &= \frac{-1}{2}(n^2 - in^1) - \hat{h}(n^7 - in^8),
	\end{aligned}
\end{equation}
where
\begin{equation}\label{eq:aGammasu22isomorphismrestrictions}
		q = \frac{k}{1+2iu}, \qquad \hat{h}:= (h - \frac{c}{24}) = \frac{1}{k}(u^2 + \frac{1}{4}).
\end{equation}

Hence these two algebras are isomorphic, as claimed.



\section{Young supertableaux and a branching of \su{M|N}} 
\label{sec:young_supertableaux_and_a_branching_of_su_m_n}

\subsection{\su{2|2} representations and supertableaux} 
\label{sub:_mathfrak_su_2_2_representations_and_supertableau}

In \cref{sub:a_basis_for_mathfrak_su_n_m} we saw that the zero mode algebra of \Ag{} in the Ramond sector is isomorphic to the Lie superalgebra \su{2|2}. We can therefore study the decomposition of an \Ag{} module as a infinite-dimensional graded \su{2|2} module, where clearly each level of \Ag{} will be able to be written in terms of \su{2|2} representations. In this subsection we will therefore introduce the representation theory of \su{2|2} and show how \su{2|2} representations can be identified with Young supertableaux as first introduced by \cite{balantekin1981dimension}. This will be seen to be very similar to the way that representations of \su{n} can be given by Young tableau. The representation theory of basic Lie superalgebras has been well studied, as well as \cite{balantekin1981dimension}, see for example \cite{hurni1987young,cummins1987composite,gotz2005tensor}.

We begin by considering the fundamental representation of the supergroup $SU(2|2)$. We let $SU(2|2)$ act on the complex Grassmann space $\mathbb{C}B_I^{2,2}$ using matrix multiplication. Following the notation of \cite{balantekin1981dimension} we denote the basis vectors of $\mathbb{C}B_I^{2,2}$ as,
\begin{equation}\label{eq:CBI22basis}
	\xi_A = \left(
	\begin{array}{c}
		\phi_a \\
		\psi_\alpha
	\end{array}
	\right),
\end{equation}
where $a, \in \{1,2\},\ \alpha \in \{3,4\}$ run over the even and odd parts of the space. This fundamental representation is therefore a 4-dimensional representation. These basis vectors then transform under $g \in SU(2|2)$ as,
\begin{equation}
	\xi_A \to \xi'_A = g_A^B \xi_B,
\end{equation}
where as usual, repeated indices are to be summed over. Clearly this can be expanded linearly to all of $\mathbb{C}B_I^{2,2}$, such that a vector $\Psi = \Psi^A \xi_A$ transforms under $g \in \mathbb{C}B_I^{2,2}$ as
\begin{equation}\label{eq:vectorfundamentaltransformexpansion}
	\begin{aligned}
		\Psi \to \Psi' &= g \cdot (\Psi^B \xi_B),\\
		&= (-1)^{\deg{(B)}\deg{(A-B)}} \Psi^B g_B^A \xi_A = g_B^A \Psi^B \xi_A,
	\end{aligned}
\end{equation}
so we can think of the components transforming as
\begin{equation}\label{eq:vectorcomponentfundamentaltransform}
	\Psi^A \to \Psi'^A = g_B^A \Psi^B.
\end{equation}
Clearly, since $\mathbb{C}B_I^{2,2}$ is a complex vector space, the components $\Psi^A$ can be taken to be complex. However, it wil be useful for us to consider $\mathbb{C}B_I^{2,2}$ as a \CBI{}-supermodule, such that the components $\Psi^A$ can be taken to be arbitrary elements of \CBI.

As explained in \cite{balantekin1981dimension}, there are actually two fundamental representations of $SU(2|2)$ which are known as Type I and Type II fundamental representations. In a Type I representation, we let $\xi_a = \phi_a$ live in the even part of the Grassmann space, $\mathbb{C}B_I^{2,0}$ and $\xi_\alpha = \psi_\alpha$ live in the odd part of the Grassmann space, $\mathbb{C}B_I^{0,2}$. In a Type II representation, we instead let $\xi_a$ live in the odd part of the Grassmann space and $\xi_\alpha$ live in the even part of the space. The representation theory of Type I representations and Type II representations can be seen to be identical~\cite{balantekin1981dimension}, that is every Type I representation is a Type II representation with the grading reversed. If we therefore consider tensor products of Type I or Type II representations exclusively then we may choose to only consider representations of Type I. Here we will not need representations on mixed tensors and so we will assume all our fundamental representations are of Type I.

It will be convenient to associate Young tableaux to our representations as in the case for $SU(N)$, so we will associate to the (Type I) fundamental representation of $SU(2|2)$ the single box tableau $\syt{{1}}$. Similarly, one may define a conjugate fundamental representation where $g \in SU(2|2)$ is defined to act on the dual of $\mathbb{C}B_I^{2,2}$ as
\begin{equation}
	\xi^\perp \to \xi'^\perp = g^\ddagger \xi^\perp,
\end{equation}
for $\xi^\perp \in \mathbb{C}B_I^{2,2\ \perp}$. This is the same definition of the conjugate fundamental representation as for $SU(N)$, and following \cite{king1970generalized} can be associated the single dotted Young tableau $\csyt{{1}}$.

As in the case of $SU(N)$, more representations can be constructed from tensor products of the fundamental and conjugate fundamental representations. As before, we shall consider $\mathbb{C}B_I^{2,2}$ to be a supermodule, so we now want to define the tensor product on $\mathbb{C}B_I^{2,2}$ as a supermodule.

Given a supercommutative superalgebra $A$, then every left $A$-supermodule $V$ may be regarded as an $A$-superbimodule by letting
\begin{equation}
	a \cdot v \equiv (-1)^{|a||v|} v \cdot a,
\end{equation}
for all homogeneous elements $a \in A$, $v \in V$ and extending linearly. In this manner we can think of $\mathbb{C}B_I^{2,2}$ as a superbimodule by defining the right action as above, since \CBI{} is supercommutative.

\begin{definition}\label{def:tensorproductofsupermodules}
	The tensor product of two $A$-superbimodules $V,W$ can now be defined as,
	\begin{equation}
		V \otimes W := F(V \times W)/E,
	\end{equation}
	where $F(V \times W)$ is the free module generated by the cartesian product of $V$ and $W$, and $E$ is the submodule generated by the equivalence relations,
	\begin{equation}\label{eq:tensorequivalencerelns}
		\begin{aligned}
			(v_1, w_1) + (v_2, w_1) &\sim (v_1+v_2,w_1), \\
			(v_1, w_1) + (v_1, w_2) &\sim (v_1,w_1+w_2),			
		\end{aligned}\qquad
		\begin{aligned}
			(v_1 \cdot a, w_1) &\sim (v_1, a \cdot w_1), \\
			a \cdot (v_1, v_2) &\sim (a \cdot v_1, v_2),
		\end{aligned}
	\end{equation}
	for $v_i,w_i \in V,W$ respectively and $a \in A$.
	
	$V \otimes W$ has a grading defined by,
	\begin{equation}
		(V \otimes W)_i = \bigoplus_{(j,k)|j+k=i \pmod{2}}V_j \otimes W_k,
	\end{equation}
	and is therefore a left $A$-supermodule. 
\end{definition}

We can now define a representation of $SU(2|2)$ on the tensor product $\mathbb{C}B_I^{2,2} \otimes \mathbb{C}B_I^{2,2}$ by letting $SU(2|2)$ act with the fundamental action on each of the factors of the tensor product. Since each fundamental representation was 4-dimensional, this tensor product representation is therefore a 16-dimensional representation. Consider $\xi \otimes \tilde{\xi} \in \mathbb{C}B_I^{2,2} \otimes \mathbb{C}B_I^{2,2}$, then $g \in SU(2|2)$ acts as,
\begin{equation}\label{eq:tensorproducttransformation}
		(\xi \otimes \tilde{\xi}) \to (\xi \otimes \tilde{\xi})' = (g \xi \otimes g\tilde{\xi}).
\end{equation}
This action can then be extended linearly to arbitrary elements of $\mathbb{C}B_I^{2,2} \otimes \mathbb{C}B_I^{2,2}$. We can use the description of $\mathbb{C}B_I^{2,2}$ as a \CBI{} module to write $\xi = \xi^A e_A$, where $e_A$ has $\epsilon_\phi$ (the even \CBI{} identity) in the $A\th{}$ position and 0 in all remaining positions. In this way, we can write the action as being one on the tensor components as is common. Using \cref{def:tensorproductofsupermodules} and \cref{eq:vectorfundamentaltransformexpansion} we can therefore expand $(\xi \otimes \tilde{\xi})$ as,
\begin{equation}\label{eq:tensorproducttransformationexpansion1}
	(\xi \otimes \tilde{\xi}) =\ (\xi^{A'} e_{A'} \otimes \tilde{\xi}^{B'} e_{B'}) = \xi^{A'} \tilde{\xi}^{B'} (e_{A'} \otimes e_{B'}),
\end{equation}
and similarly,
\begin{equation}\label{eq:tensorproducttransformationexpansion2}
	(\xi \otimes \tilde{\xi})' = (g \xi \otimes g\tilde{\xi}) = g_A^{A'} \xi^A g_B^{B'} \tilde{\xi}^B (e_{A'} \otimes e_{B'}).
\end{equation}
One may therefore consider the components of \cref{eq:tensorproducttransformation} to transform as,
\begin{equation}\label{eq:tensorproductcomponenttransformation}
	\xi^A \tilde{\xi}^B \to (\xi^A \tilde{\xi}^B)' = g_{A'}^{A} \xi^{A'} g_{B'}^{B} \tilde{\xi}^{B'}.
\end{equation}
Clearly, one can now define an action of $g \in SU(2|2)$ on $(\mathbb{C}B_I^{2,2})^{\otimes m} \otimes (\mathbb{C}B_I^{2,2})^{\perp \otimes n}$ for arbitrary $m,n \in \mathbb{Z}_+$ by applying $g$ or $g^\ddagger$ to each factor as appropriate.

The tensor product representation is not irreducible however~\cite{balantekin1981dimension}, as may be seen by considering a permutation operator,
\begin{equation}\label{eq:permutationoperator}
	\begin{aligned}
		P: V \otimes W \to W \otimes V, \\
		v \otimes w \mapsto w \otimes v,
	\end{aligned}
\end{equation}
for $v \in V, w \in W$. This can be seen to commute with the action of $SU(2|2)$ on the tensor product,
\begin{equation}\label{eq:permutationcommuteswithgroupaction}
	P(g(\xi \otimes \tilde{\xi})) = P(g \xi \otimes g \tilde{\xi}) = (g \tilde{\xi} \otimes g \xi), = g(\tilde{\xi} \otimes \xi) = g(P(\xi \otimes \tilde{\xi})),
\end{equation}
and yet is not a multiple of the identity operator on $\mathbb{C}B_I^{2,2} \otimes \mathbb{C}B_I^{2,2}$, and so by Schur's lemma, the tensor product is not irreducible. However, as for the case of $SU(N)$, one can form irreducible representations of $SU(2|2)$ using suitably symmeterised and antisymmeterised tensor products of $\mathbb{C}B_I^{2,2}$ and $(\mathbb{C}B_I^{2,2})^\perp$, each of which may be associated to a supertableau as in \cref{fig:genericsupertableau} (the dashed diagonals are explained later in \cref{sub:branching_rules_for_su_2_2}). Note that since the Levi-Civita tensor is not an invariant of $SU(M|N)$, a tableau containing dotted boxes (that is a representation on tensors containing covariant indices) may not be converted to a tableau containing only undotted boxes \cite{balantekin1982branching}.

\begin{figure}[htbp]
	\hspace{2.5cm}
	\begin{tikzpicture}[scale=0.5]
		\reflectbox{\csyt[0.5]{{3,1,1}}}\hspace{-2.1mm}\raisebox{-0.21\height}{\syt[0.5]{{3,2,2,1}}}
		\draw[dashed] (-3.205,1.65) -- (.295,-1.85);
		\draw[dashed] (-3.21,1.65) -- (-6.71,-1.85);
	\end{tikzpicture}
	\caption{A representation of $SU(2|2)$ acting on tensors with both covariant and contravariant indices.}
	\label{fig:genericsupertableau}
\end{figure}
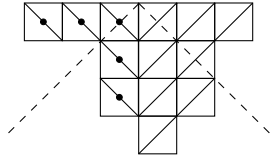

\begin{example}\label{eg:syt2}
	Let us consider the example of the symmetric tensor product on two copies of $\mathbb{C}B_I^{2,2}$. This is described by the supertableau $\syt{{2}}$.
	
	We denote the symmetric tensor product of $\xi, \tilde{\xi} \in \mathbb{C}B_I^{2,2}$ as $\Xi$. As in \cref{eq:tensorproducttransformationexpansion1,eq:tensorproducttransformationexpansion2}, we can expand $\Xi$ in terms of components as
	\begin{equation}\label{eq:supersymmetriccomponents}
		\begin{aligned}
			\Xi = \xi \otimes \tilde{\xi} + \tilde{\xi} \otimes \xi &= (\xi^A \tilde{\xi}^B + \tilde{\xi}^A \xi^B)(e_A \otimes e_B), \\
			&= (\xi^A \tilde{\xi}^B + (-1)^{|A||B|} \xi^B \tilde{\xi}^A)(e_A \otimes e_B).
		\end{aligned}
	\end{equation}
	We now see that the components of this tensor are symmetric unless both $A$ and $B$ take values in the odd part of the space (i.e $A=\alpha, B = \beta$ as in \cref{eq:CBI22basis}), in which case the components are antisymmetric. Using the usual convention of parentheses to denote symmetric indices, we therefore have
	\begin{equation}
		\Xi^{(AB)} = \xi^A \tilde{\xi}^B + (-1)^{|A||B|} \xi^B \tilde{\xi}^A.
	\end{equation}
	The dimension of the symmetric space is therefore the sum of the number of independent components of $\Xi^{ab}$, $\Xi^{a\beta}$ and $\Xi^{\alpha \beta}$. These have three, four and one independent components respectively, since the first two are symmetric and the final one is antisymmetric, for $a,b,\in \{1,2\}$ and $\alpha,\beta \in \{3,4\}$, so $\Xi^{(AB)}$ has eight independent components and the symmetric space is 8-dimensional.
\end{example}

It is now clear that, due to the Grassmann nature of the odd part of the space, whenever we symmeterise two indices, the components behave as antisymmetric indices when both indices lie in the odd part of the space. For this reason, \cite{balantekin1981dimension} refer to the tensors as `symmeterised' and `supersymmeterised', to mean symmeterised on the even part of the space and antisymmeterised on the odd part of the space. Similarly, when we antisymmeterise indices, the components behave symmetrically when both indices lie in the odd part of the space; there is therefore no limit to the length of a column for a supertableau.

\begin{definition}\label{def:supertableaueccentricity}
	It will be useful to define the horizontal and vertical \emph{eccentricity} of a (totally un-dotted) supertableau to be the number of boxes in the first row and first column respectively. The supertableau show in \cref{fig:supertableaueccentricity} has horizontal eccentricity $m$ and vertical eccentricity $n$. Such a tableau will be said to have eccentricity $(m,n)$.
\end{definition}

\begin{definition}\label{def:maximallyeccentril}
	A (totally un-dotted) supertableau of eccentricity $(m,n)$ containing $N$ boxes will be called \emph{maximally eccentric} if $N = m + n - 1$, and \emph{non-maximally eccentric} if $N \ge m+n$. The tableau in \cref{fig:supertableaueccentricity} is therefore maximally eccentric, whereas the tableau shown in \cref{fig:nonmaximallyeccentricsupertableau} is non-maximally eccentric.
\end{definition}

\begin{figure}[htbp]
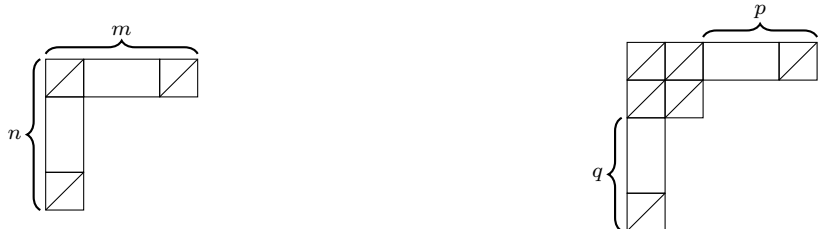

	\begin{subfigure}{0.45\textwidth}
		\centering
			\msymnantisyt{m}{n}
		\caption{A supertableau of horizontal eccentricity $m$ and vertical eccentricity $n$}
		\label{fig:supertableaueccentricity}
	\end{subfigure}
	\hspace{0.5cm}
	\begin{subfigure}{0.45\textwidth}
		\centering
			\nonmaxeccsyt{p}{q}
		\caption{A non-maximally eccentric supertableau of eccentricity $(p+2,q+2)$.}
		\label{fig:nonmaximallyeccentricsupertableau}
	\end{subfigure}
	\label{fig:supertableaumaxandnonmaxeccentric}
	\caption{Maximally and non-maximally eccentric tableaux.}
\end{figure}

Elements of the Lie supergroup $SU(2|2)$ near the identity are given by,
\begin{equation}
	G = \exp{(\sum_{i=1}^7 X^i m^i + \sum_{j=1}^8 \Theta^j n^j)},
\end{equation}
where $m^i$ and $n^i$ are as in \cref{def:su22}, and $X^i, \Theta^j$ are elements of $\mathbb{R}B_I$ close to the identity. The Lie superalgebra elements are therefore the linear terms appearing in the expansion of the supergroup elements, and hence as for Lie groups and Lie algebras, a representation of the Lie supergroup $SU(2|2)$ gives a natural representation of the Lie superalgebra \su{2|2} as the linearised action of the supergroup. A more formal description of the connection between tensor representations of supergroups and the associated superalgebras is discussed in the case of $GL(M|N)$ in \cite{fioresi2011tensor}.

One can therefore use supertableaux to describe irreducible representations of \su{M|N}, where the supertableaux describes the suitably symmeterised tensors on which \su{M|N} acts as the tensor product of fundamental and conjugate fundamental representations as necessary. 

In \cref{sub:a_basis_for_mathfrak_su_n_m} we showed how the fundamental representation of \su{2|2} was isomorphic to the zero mode algebra of a massless representation of \Ag{} with $l^\pm = \frac{1}{2}$ in the Ramond sector. Thanks to the results of this subsection, this can therefore be summarised as
\begin{equation}
	\Ch^{\Ag{},R}_{0,l^\pm=\frac{1}{2}} = \Ch \left( \syt{{1}} \right)q^h + \mathcal{O}(q^{h+1}),
\end{equation}
where the character for \Ag{} is defined as in \cref{eq:agcharacterastrace}, $\Ch \left( \syt{{1}} \right)$ is to be understood as $\Tr_V(z_+^{2T_0^{+3}}z_-^{2T_0^{-3}})$, for $V$ the fundamental representation of \su{2|2} and $T_0^{\pm3}$ the elements of the Cartan subalgebra of the even subalgebra of \su{2|2}.


\subsection{Branching rules for \su{2|2}} 
\label{sub:branching_rules_for_su_2_2}

Having shown that the \su{2|2} superalgebra is isomorphic to the \Ag{} zero mode algebra in \cref{sub:a_basis_for_mathfrak_su_n_m}, we know that the even subalgebra of \su{2|2} is $\su{2} \otimes \su{2} \otimes \u{1}$. It is clear that given a representation $(\Gamma,V)$ of an algebra $\mathfrak{g}$, with subalgebra $\mathfrak{h} \subset \mathfrak{g}$, then $(\Gamma,V)$ also provides a representation of the subalgebra $\mathfrak{h}$. In general, this representation will be reducible, and so will be given by the direct sum of several irreducible representations. This decomposition,
\begin{equation}
	(\Gamma,V) \mapsto \bigoplus_{n} a_n(\Gamma_n,V_n),
\end{equation}
where $(\Gamma_n,V_n)$ are irreducible representations of the subalgebra $\mathfrak{h}$ and $a_n$ are the multiplicities at which they appear in the decomposition, is known as a \emph{branching rule} for $\mathfrak{g}$ to $\mathfrak{h}$. In this subsection we will show how to calculate the branching of an irreducible representation of \su{2|2} into irreducible representations of the bosonic (even) subalgebra $\su{2|2} \to \su{2} \otimes \su{2} \otimes \u{1}$ using Young (super)tableaux \cite{balantekin1982branching}.
 
The branching for $\su{M|N} \mapsto \su{M} \otimes \su{N} \otimes \u{1}$ works similarly to the branching $SU(M+N) \to SU(M) \otimes SU(N)$, which is described with an example in \cref{sec:branching_su_m_n_to_su_m_times_su_n}. We now consider the superspace $\CBI^{m,n}$ to be the direct sum $\mathbb{C}B_{I,0}^m \oplus \mathbb{C}B_{I,1}^n$ as in \cref{eq:CBI22basis}. The even part of the space transforms under the \su{M} and is a singlet under the \su{N}, while the odd part of the space transforms under the \su{N} and is a singlet under the \su{M}. Additionally, the \u{1} generator is embedded in \su{M|N} as
\begin{equation}
	u = \left(\begin{array}{c|c}
		\frac{1}{M} & 0 \\
		\hline
		0 & \frac{1}{N}
	\end{array}\right),
\end{equation}
such that is supertraceless. Therefore a vector in the even part of the space has \u{1} charge $\frac{1}{M}$, while a vector in the odd part of the space has charge $\frac{1}{N}$. We can therefore branch a (totally contravariant, using only un-dotted boxes) representation of \su{M|N} in the same way as we branch $SU(M+N)$. However since supertableaux show supersymmeterisation of the tensor space, we should reflect the \su{N} tableau through its diagonal as indicated in \cref{fig:genericsupertableau} in order to show the correct symmeterisation for the odd part of the space, as described in \cref{sub:_mathfrak_su_2_2_representations_and_supertableau}.

We now consider an example of branching an \su{2|2} representation into a sum of $\su{2} \otimes \su{2} \otimes \u{1}$ representations.

\begin{example}\label{eg:branchingsu2/2}
	Consider the representation
	\begin{equation*}
		\syt{{3,2}}
	\end{equation*}
	of \su{2|2}. In \cref{eg:FrobeniusReciprocity}, we calculated the decomposition of this tableau for $SU(M+N)$ (in fact we assumed $M=3,N=4$, but on the level of the tableau the answer is valid for any $M,N$ as long as we did not simplify the tableau, which we did not), so now to calculate the branching of \su{2|2}, we simply have to transpose the tableau in the second part of each product on the right hand side and then simplify the resulting tableaux. This gives
	\begin{equation}\label{eq:syt32branching2}
		\begin{aligned}
				\syt{{3,2}} \mapsto & \left( \yt{{1}}, 1 \right)_{\frac{5}{2}} + \left( 1, \yt{{1}} \right)_{\frac{5}{2}} + \left( \yt{{1}}, 1 \right)_{\frac{5}{2}} + \left( \yt{{2}}, \yt{{1}} \right)_{\frac{5}{2}} \\
				 & + \left( \yt{{3}}, 1 \right)_{\frac{5}{2}} +  \left( \yt{{1}}, 1 \right)_{\frac{5}{2}} + \left( 1, \yt{{1}} \right)_{\frac{5}{2}} \\
				 & + \left( \yt{{1}}, \yt{{2}} \right)_{\frac{5}{2}} + \left( \yt{{2}}, \yt{{1}} \right)_{\frac{5}{2}},
			\end{aligned}
	\end{equation}
	where we have labelled each representation of $\su{2} \otimes \su{2}$ with the total \u{1} charge.
\end{example}

Branching supertableaux also gives us a way to see that we must allow supertableaux with more than 4 rows for \su{2|2}.

\begin{example}\label{eg:supertableaulongcolumn}
	Clearly it is not possible to take the antisymmetric 5-fold representation of \su{4} -- we cannot antisymmeterise more than four basis vectors without repetition. However, if we branch the 5-fold antisymmetric representation for \su{2|2}, it is clear that we obtain the following branching rule (note that the \u{1} charge is clearly $\frac{5}{2}$ for each representation, so the \u{1} charge is not shown in the following):
	\begin{equation}
		\syt{{1,1,1,1,1}} \mapsto \left( 1, \yt{{3}} \right) + \left( \yt{{1}}, \yt{{4}} \right) + \left( 1, \yt{{5}} \right).
	\end{equation}
	This is a valid representation with dimension 20.
\end{example}

\begin{example}\label{eg:dimensionofsymmetricrepresentation}
	This branching rule also gives an easy way to show that the dimension of the $n$-fold symmetric tensor representation of \su{2|2} is $4n$, agreeing with what we calculated in \cref{eg:syt2} by considering the tensor components directly.
	\begin{equation}
		\nsymsyt{n} \mapsto ( \nsymyt{n}, 1 ) + ( \nsymyt{n-1}, \yt{{1}} ) + ( \nsymyt{n-2}, 1 ).
	\end{equation}
	The dimensions of the representations on the right hand side of this equation are $n+1$, $2n$ and $n-1$ respectively, showing the $n$-fold symmetric representation has dimension $4n$.
\end{example}

It will also be useful for us to note that since we are interested specifically in \su{2|2} and its branching into $\su{2} \otimes \su{2}$, that representations described by tableaux with more than 2 rows of length strictly greater than 2, as shown in \cref{fig:zerosupertableau}, are zero representations. This is due to the supersymmeterisation of the \su{M|N} indices; if we branch the \su{2|2} representation to find the $\su{2} \otimes \su{2}$ content, one of the two representations of \su{2} must be described by a tableau with at least 3 rows which is clearly a zero representation of \su{2}. 

\begin{figure}[htbp]
	\centering
		\zerosupertableau{p}{q}{r}{s}
	\caption{A zero representation of \su{2|2}}
	\label{fig:zerosupertableau}
\end{figure}



\section{Describing representations of \Ag{} using Young supertableaux} 
\label{sec:describing_representations_of_ag_using_young_supertableaux}

\subsection{Decomposing a representation of \Ag{}} 
\label{sub:branching_a_generic_representation_of_ag}

We have established in \cref{sub:a_basis_for_mathfrak_su_n_m} that the zero mode algebra of \Ag{} in the Ramond sector is \su{2|2}, and that we can study the $\su{2} \otimes \su{2}$ $\otimes$ \u{1} content of an \su{2|2} representation by studying the branching of the supertableau describing the \su{2|2} representation as described in \cref{sub:branching_rules_for_su_2_2}. We can therefore now identify \su{2|2} representations whose $\su{2} \otimes \su{2}$ content matches representations of \Ag{} at a given level; the general method to do this is described in \cref{eg:masslessag:kp=3:km=2:lp=1/2:lm=1/2:level1}.

\begin{example}\label{eg:masslessag:kp=3:km=2:lp=1/2:lm=1/2:level0}
	We have already considered the case of a massless representation of \Ag{} with $l^\pm = \frac{1}{2}$ in \cref{sub:a_basis_for_mathfrak_su_n_m}, and in \cref{sub:_mathfrak_su_2_2_representations_and_supertableau} we identified the ground level of this \Ag{} representation with the fundamental representation of \su{2|2}. Now that we have seen how to branch \su{2|2} supertableaux, we can branch the fundamental representation as
	\begin{equation}
		\syt{{1}} \mapsto \left(\yt{{1}},1\right)_1 + \left(1,\yt{{1}}\right)_1,
	\end{equation}
	and recognise the two \su{2} doublets (one of $\su{2}^+$ and one of $\su{2}^-$) which appear at ground level in \Ag{} as shown in \cref{fig:Images_Ag0-characters_massless3-2-1:2-1:2-ground}.
	\begin{figure}[ht!]
	  \centering \includegraphics[width=.8\textwidth]{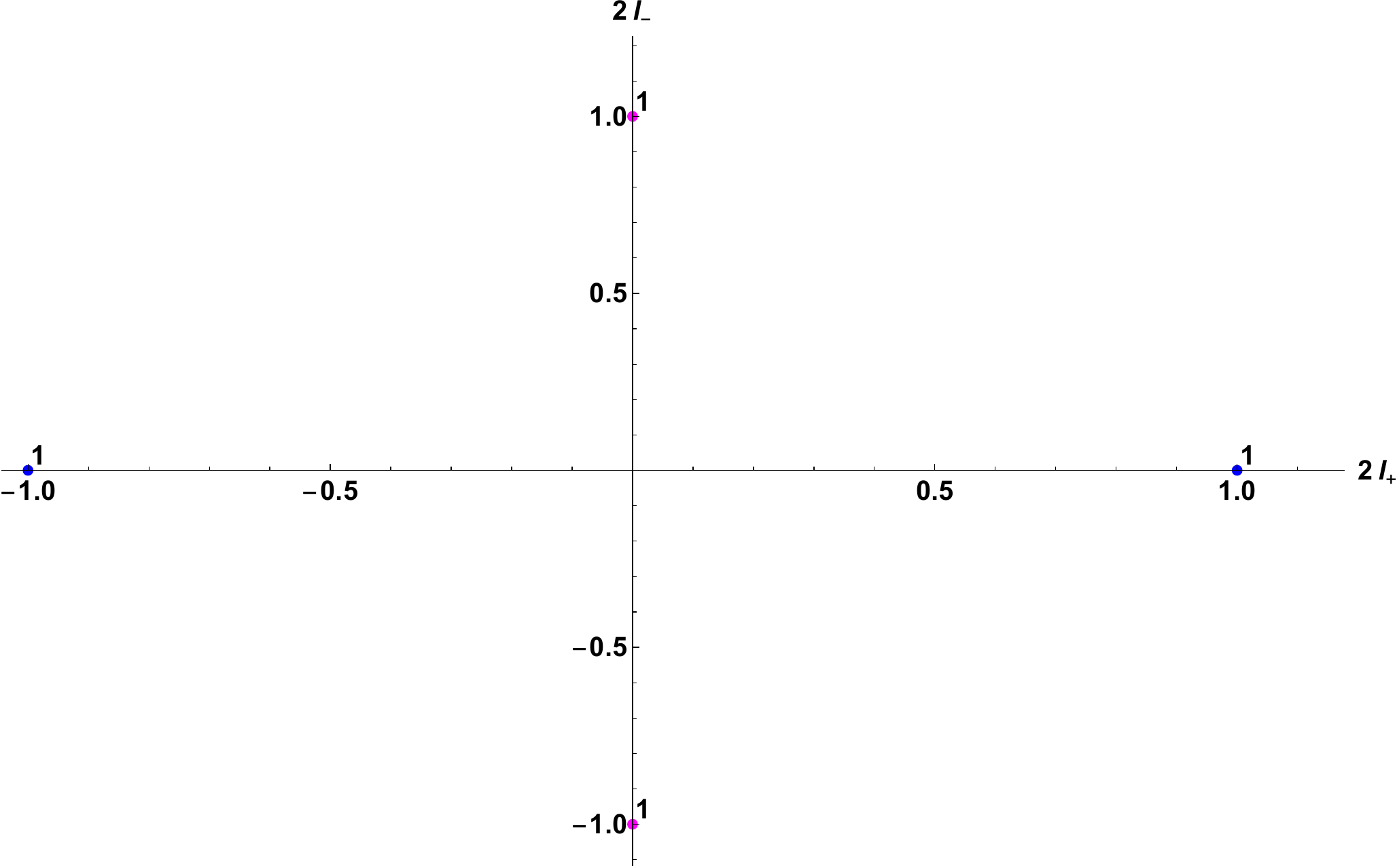}
	  \caption{The ground level of a Ramond representation of \Ag{} with $k^+ = 3, k^- = 2, l^+ = \frac{1}{2}, l^- = \frac{1}{2}$. One can clearly see the doublet of \sut{+} in blue and the doublet of \sut{-} in fuchsia.}
	  \label{fig:Images_Ag0-characters_massless3-2-1:2-1:2-ground}
	\end{figure}
	
	As noted at the end of \cref{sub:_mathfrak_su_2_2_representations_and_supertableau}, we therefore have
	\begin{equation}
		\Ch^{\Ag{},R}_{0,k^+=3,k^-=2,l^\pm=\frac{1}{2}} = \Ch \left( \syt{{1}} \right)q^h + \mathcal{O}(q^{h+1}).
	\end{equation}
	
\end{example}

\begin{example}\label{eg:masslessag:kp=3:km=2:lp=1/2:lm=1/2:level1}
	Similarly, we can consider the level 1 states of the same representation of \Ag{} ($k^+ = 3, k^- = 2, l^+ = \frac{1}{2}, l^- = \frac{1}{2}$) shown in \cref{fig:Images_Ag0-characters_massless3-2-1:2-1:2-level1}.
	\begin{figure}[htbp]
	  \centering
			\includegraphics[width=.8\textwidth]{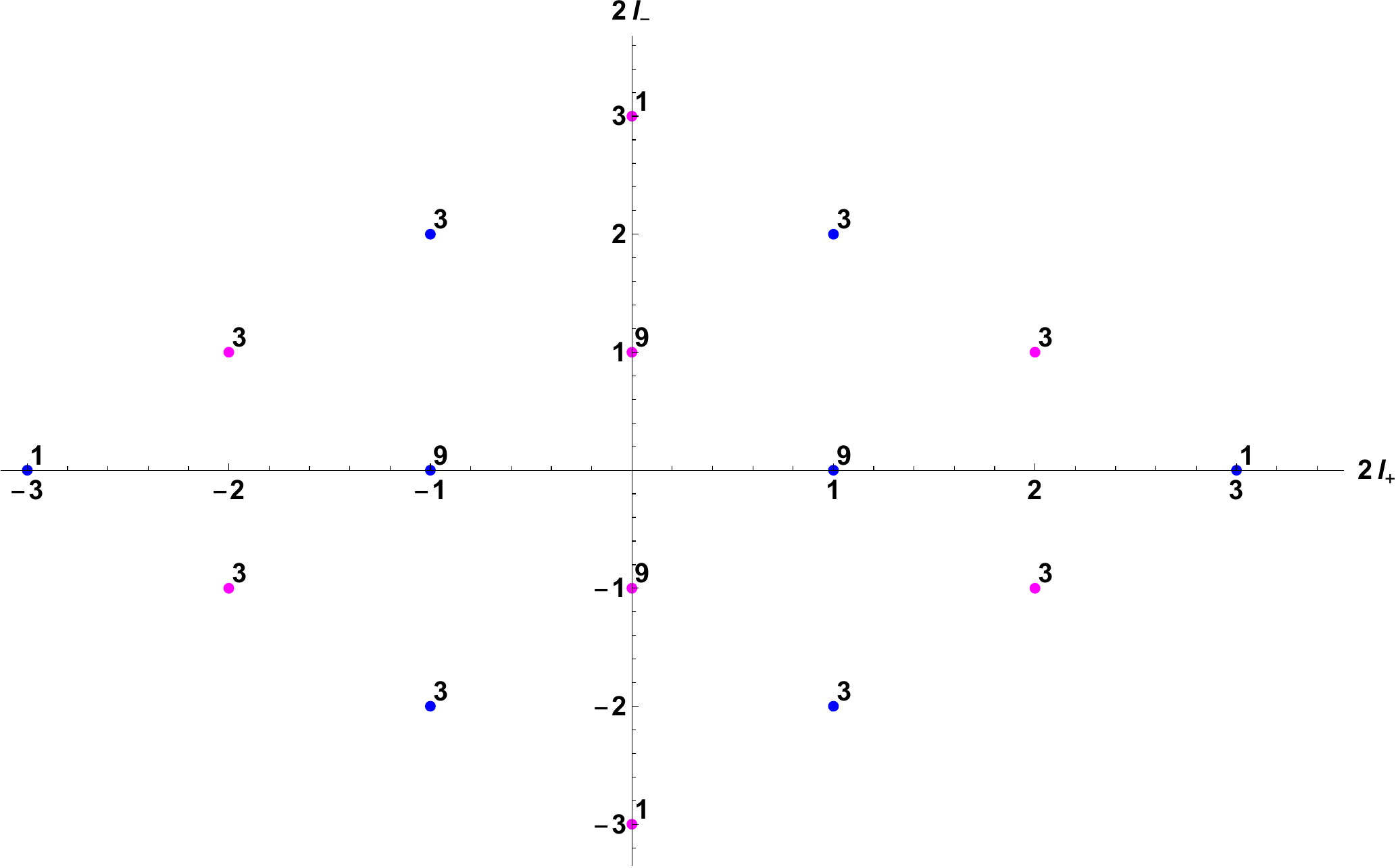}
	  \caption{Level 1 states of a Ramond representation of \Ag{} with $k^+ = 3, k^- = 2, l^+ = \frac{1}{2}, l^- = \frac{1}{2}$}
	  \label{fig:Images_Ag0-characters_massless3-2-1:2-1:2-level1}
	\end{figure}
	
	To find the \su{2|2} representations which contain the right $\su{2} \otimes \su{2}$ content we follow the following method: We identify the largest multiplet of $\su{2}^+$, in this case the quadruplet which is a singlet of $\su{2}^-$; We identify the smallest representation of \su{2|2} which contains this $\su{2} \otimes \su{2}$ content, in this case the representation described by
	\begin{equation*}
		\syt{{3}};
	\end{equation*}
	We calculate the branching of this representation of \su{2|2} (suppressing the \u{1} charge),
	\begin{equation*}
		\syt{{3}} \mapsto \left( \yt{{3}},1 \right) + \left( \yt{{2}},\yt{{1}} \right) + \left( \yt{{1}},1 \right);
	\end{equation*}
	We now identify the next largest multiplet of $\su{2}^+$, in this case one of the two remaining copies of
	\begin{equation*}
		\left(\yt{{2}},\yt{{1}}\right),
	\end{equation*}
	and find the smallest representation of \su{2|2} which contains this but does not contain any representations of $\su{2} \otimes \su{2}$ already considered, namely
	\begin{equation*}
		\syt{{2,1}};
	\end{equation*}
	We now identify the next largest representation of $\su{2}^+$ and continue this process.
	
	Using the method described above, one finds this representation of \Ag{} can be branched into \su{2|2} representations as
	\begin{equation}
		\begin{aligned}
			\Ch^{\Ag{},R}_{0,k^+=3,k^-=2,l^\pm=\frac{1}{2}} =& \Ch \left( \syt{{1}} \right)q^h + \left(\vphantom{\syt{{1,1,1}}}\Ch \left( \syt{{3}} \right) \right. \\
			& \left. + 2\Ch \left( \syt{{2,1}} \right) + \Ch \left( \syt{{1,1,1}} \right) \right)q^{h+1} + \mathcal{O}(q^{h+2}).
		\end{aligned}
	\end{equation}
\end{example}

This process can easily be continued to higher levels of the \Ag{} representation. For the massless representation with $l^\pm = \frac{1}{2}$ we have computed the decomposition of the \Ag{} representation into \su{2|2} representations up to the sixth excited level.

\begin{proposition}\label{pro:masslessgroundeccentricsupertableau}
	The ground level of a unitary irreducible massless representation of \Ag{} described by parameters $k^+,k^-$ and quantum numbers $l^+,l^-$ is described by a single representation of the superalgebra \su{2|2}, which is in turn described by a maximally eccentric Young supertableau of eccentricity $(2l^+,2l^-)$.
\end{proposition}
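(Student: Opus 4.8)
The plan is to prove the two assertions of the proposition separately: first that the ground level forms a single irreducible \su{2|2} module rather than a direct sum of several, and then that this module is the one labelled by the maximally eccentric supertableau of eccentricity $(2l^+,2l^-)$.

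For the first assertion I would argue structurally. In the Ramond sector the dimension-$\tfrac12$ fields $Q^a$ and the dimension-$\tfrac32$ supercurrents $G^a$ are integer moded, so their zero modes $Q_a$ and $G_a$ — together with the bosonic zero modes — are precisely the generators of the zero mode algebra \su{2|2} identified in \cref{sub:a_basis_for_mathfrak_su_n_m}, and they all commute with $L_0$. Hence every generator of \su{2|2} preserves the conformal weight, and the ground (lowest-weight) level of the \Ag{} module is exactly the cyclic \su{2|2}-module generated from the primary $\ket{\Om{+}}$. Since the representation is unitary this module is completely reducible, and being cyclic on a single highest weight vector it is in fact a single irreducible \su{2|2} representation, which is the first claim.

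For the second assertion I would pin down the supertableau by matching the even content on both sides. On the \Ag{} side one reads off the $\su{2}\otimes\su{2}\otimes\u{1}$ content of the ground level from the massless character formulae recalled in \cref{sec:character_formulae_for_ag}: the primary $\ket{\Om{+}}$ sits in the spin-$l^+$ multiplet of $\su{2}^+$ and is an $\su{2}^-$ singlet (as in the explicit $l^\pm=\tfrac12$ construction of \cref{sub:a_basis_for_mathfrak_su_n_m}, where $\ket{\Om{-}}=G_{-K}\ket{\Om{+}}$), and the remaining ground states fill out a definite $\su{2}\otimes\su{2}$ content, at a common $\u{1}$ charge, in which the largest $\su{2}^+$ spin is $l^+$ and the largest $\su{2}^-$ spin is $l^-$. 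On the supertableau side I would feed the hook $\lambda$ of eccentricity $(2l^+,2l^-)$ — the maximally eccentric tableau of \cref{def:maximallyeccentril}, drawn in \cref{fig:supertableaueccentricity} — through the branching algorithm of \cref{sub:branching_rules_for_su_2_2} (branch as $SU(4)$, transpose the second, odd, factor, and discard any tableau with more than two rows). A first row of $2l^+$ even boxes produces a symmetric rank-$2l^+$ tensor of $\su{2}^+$, i.e. spin $l^+$, as the largest $\su{2}^+$ component; after the transpose the first column of $2l^-$ boxes supersymmetrises into a symmetric rank-$2l^-$ tensor of $\su{2}^-$, i.e. spin $l^-$, exactly as in the special cases already computed in \cref{eg:dimensionofsymmetricrepresentation}. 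Carrying out the branching in general and comparing term by term with the ground-level content then identifies the two.

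The main obstacle is the combinatorial heart of the second assertion: showing that the maximally eccentric hook, and \emph{not} some larger (non-maximally eccentric) tableau of the same eccentricity, reproduces the ground level exactly — with no spurious $\su{2}\otimes\su{2}$ components and none missing. This is where the massless condition $k(h-\tfrac{c}{24})=u^2+\tfrac14$ must enter: it is the shortening (atypicality) of the \Ag{} representation that collapses the generic, fat ground-level multiplet down to the thin hook, so the proof must tie the maximal eccentricity of \cref{def:maximallyeccentril} to the atypicality of the corresponding \su{2|2} highest weight, using the dimension formula for supertableaux of \cite{balantekin1981dimension}. Establishing that correspondence — maximal eccentricity $\Leftrightarrow$ atypical (massless) — and checking that the box counts and \u{1} charges line up is the step I expect to require the most care.
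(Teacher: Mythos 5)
Your proposal follows essentially the same route as the paper: part one is the same irreducibility argument (the paper phrases it as "multiple \su{2|2} highest weight states at ground level would each generate a separate \Ag{} module, contradicting irreducibility", which is your cyclicity point in different clothing), and part two is exactly the paper's term-by-term comparison of the branching of the eccentricity-$(2l^+,2l^-)$ hook against the eight $\su{2}\otimes\su{2}$ ground-level multiplets read off from the massless character. The "main obstacle" you flag — tying maximal eccentricity to atypicality via the dimension formula — is not needed: the paper simply computes the branching of the hook explicitly and finds it reproduces the ground-level content exactly (all eight multiplets with multiplicities $1,1,2,2,1,1$), and since part one already guarantees the ground level is a \emph{single} irreducible \su{2|2} module, the exact character match settles the identification without any separate argument ruling out fatter tableaux.
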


\begin{proof}\label{pf:masslessgroundeccentricsupertableau}
	We have already showed in \cref{sub:a_basis_for_mathfrak_su_n_m} that \su{2|2} satisfies the zero mode algebra of \Ag{} and so it is clear the the ground level of an irreducible representation of \Ag{} can be given by a representation of \su{2|2}. This representation of \su{2|2} must be irreducible, since it is built on the same highest weight state as the irreducible \Ag{} representation using the same operators. If there existed multiple \su{2|2} highest weight states at the ground level, then each one of these \su{2|2} highest weight states would furnish an entire \Ag{} representation, and hence the original representation of \Ag{} would not be irreducible. We are therefore left only to show that this irreducible representation is described by a maximally eccentric supertableau of eccentricity $(2l^+,2l^-)$.
	
	The generic massless Ramond representation of \Ag{} has 8 highest weight states of $\su{2} \otimes \su{2}$ as shown in \cref{sec:character_formulae_for_ag}. We therefore have that the ground level of \Ag{} is given by
	\begin{equation}\label{eq:groundlevelcharacterofmasslessaginsu2characters}
		\begin{aligned}
			\Ch^{\Ag,R}_{0,l^+,l^-} =& \left(\chi^+_{l^+}\chi^-_{l^- - \frac{1}{2}} + \chi^+_{l^+ - \frac{1}{2}}\chi^-_{l^-} + 2\chi^+_{l^+ - \frac{1}{2}}\chi^-_{l^- - 1} + 2\chi^+_{l^+ - 1}\chi^-_{l^- - \frac{1}{2}}\right. \\
			&+ \left.\chi^+_{l^+ - 1}\chi^-_{l^- - \frac{3}{2}} + \chi^+_{l^+ - \frac{3}{2}}\chi^-_{l^- - 1}\right)q^h + \mathcal{O}(q^{h+1}),
		\end{aligned}
	\end{equation}
	where
	\begin{equation*}
		\chi^\pm_l := \chi_l(z_\pm) = \sum_{n=-l}^l z_\pm^{2n}
	\end{equation*}
	is the $\su{2}^\pm$ character for a representation of dimension $2l+1$. We now want to calculate the branching of
	\begin{equation*}
		\msymnantisyt{2l^+}{2l^-}
	\end{equation*}
	to check the $\su{2} \otimes \su{2}$ content of this representation.
	
	\begin{equation}\label{eq:branchingthemantinsymsyt}
		\begin{aligned}
			\msymnantisyt[0.4]{2l^+}{2l^-} \mapsto &\left( \nsymyt[0.4]{2l^+ + 1},\nsymyt[0.4]{2l^- -1} \right) + \left( \nsymyt[0.4]{2l^+},\nsymyt[0.4]{2l^- -1} \right) \\
		    & + \left( \nsymyt[0.4]{2l^+ - 2},\nsymyt[0.4]{2l^- -1} \right) + \left( \nsymyt[0.4]{2l^+ -2},\nsymyt[0.4]{2l^- -3} \right) \\
			& + \left( \nsymyt[0.4]{2l^+ - 2},\nsymyt[0.4]{2l^- -1} \right) + \left( \nsymyt[0.4]{2l^+ -1},\nsymyt[0.4]{2l^- -2} \right) \\
			& + \left( \nsymyt[0.4]{2l^+ - 1},\nsymyt[0.4]{2l^-} \right) + \left( \nsymyt[0.4]{2l^+ -3},\nsymyt[0.4]{2l^- -2} \right),
		\end{aligned}
	\end{equation}
	where we have suppressed the \u{1} charges and simplified trivial columns of length 2 on the right hand side.
	
	The representation of \su{2} described by
	\begin{equation*}
		\nsymyt{n}
	\end{equation*}
	is the $(n+1)$-dimensional representation with character
	\begin{equation*}
		\chi_{\frac{n}{2}}(z),
	\end{equation*}
	and so by comparing \cref{eq:groundlevelcharacterofmasslessaginsu2characters} and \cref{eq:branchingthemantinsymsyt} we see that the ground level of a massless representation of \Ag{} in the Ramond sector with \su{2} charges $l^+$ and $l^-$ is described by the supertableau
	\begin{equation*}
		\msymnantisyt{2l^+}{2l^-}
	\end{equation*}
	as claimed.
\end{proof}

Similarly we can recognise the \su{2|2} representation that describes the ground level of a massive representation of \Ag{} in the Ramond sector. We first give a lemma on the branching of non-maximally eccentric supertableau that will be useful for the massive case.

\begin{lemma}\label{lem:branchingofnonmaximallyeccentrictableau}
	Under branching into $\su{2} \otimes \su{2}$ representations we have the following equivalence:
	\begin{equation}
		\nonmaxeccsyt{2l^+ - 2}{2l^- - 2} \overset{\scriptscriptstyle\su{2}\otimes\su{2}}{\equiv} \left( \msymnantisyt{2l^+ -1}{2l^-} + \msymnantisyt{2l^+}{2l^- -1} \right).
	\end{equation}
\end{lemma}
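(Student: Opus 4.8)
The plan is to compute the $\su2\otimes\su2$ content of each side of the asserted equivalence independently and then check that the two resulting multisets of irreducibles coincide. After the branching procedure of \cref{sub:branching_rules_for_su_2_2} both sides are formal sums of pairs $(\nsymyt{p},\nsymyt{q})$ (equivalently, characters in $z_+,z_-$); since the symbol $\overset{\su{2}\otimes\su{2}}{\equiv}$ discards the \u1 charge, it is harmless that the two sides carry different numbers of boxes. The right-hand side requires no genuinely new computation: each summand $\msymnantisyt{2l^+-1}{2l^-}$ and $\msymnantisyt{2l^+}{2l^--1}$ is maximally eccentric, so I would obtain its branching simply by substituting the shifted eccentricities $(2l^+-1,2l^-)$ and $(2l^+,2l^--1)$ into the maximally eccentric branching \eqref{eq:branchingthemantinsymsyt}, then add the two lists and collect terms.

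For the left-hand side I first read off the shape of the non-maximally eccentric tableau: eccentricity $(2l^+,2l^-)$ together with the single extra box in the $(2,2)$ position gives the partition $\lambda=(2l^+,2,1^{2l^--2})$. I would then branch $\lambda$ by the recipe of \cref{sub:branching_rules_for_su_2_2}: decompose it as for $SU(M+N)$ following \cref{sec:branching_su_m_n_to_su_m_times_su_n}, writing $\lambda\mapsto\sum_{\mu,\nu}c^\lambda_{\mu\nu}(\mu,\nu)$ with $c^\lambda_{\mu\nu}$ the Littlewood--Richardson coefficients, transpose the \su{N} factor $\nu\mapsto\nu^t$, and finally simplify for $\su2\otimes\su2$ by deleting length-two columns and killing any tableau with three or more rows. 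In practice it is cleanest to organise the sum over even sub-tableaux $\mu=(\mu_1,\mu_2)\subseteq\lambda$, for which the odd factor is the conjugate skew Schur function $s_{(\lambda/\mu)^t}(y_1,y_2)$. The requirements that $\mu$ have at most two rows and that this odd factor be nonzero in two variables, i.e.\ that every row of the skew shape $\lambda/\mu$ have at most two boxes, force $\mu_1\in\{2l^+-2,2l^+-1,2l^+\}$ and $\mu_2\in\{0,1,2\}$, a nine-term family. For generic $l^\pm$ the long first-row arm detaches from the remaining two columns, so each odd factor splits as a product of a column factor $s_{(1^{2l^+-\mu_1})}$ from the detached arm and a factor from the surviving columns, which I can expand by the Pieri rule and reduce to a single $\su2^-$ length.

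Carrying this out and collecting both sides, I expect each to reduce to the same nine-type multiset (with respective multiplicities $1,2,2,4,1,1,2,2,1$), which would establish the claimed equivalence. The main obstacle is the purely combinatorial bookkeeping on the left: the near-hook differs from the pure hook of \eqref{eq:branchingthemantinsymsyt} by the one box at $(2,2)$, and that box introduces the new value $\mu_2=2$ together with the column-Pieri products $e_2\cdot s_{(\cdots)}$, so I must be careful not to drop or double-count terms, and in particular must track which products acquire a three-row shape and hence vanish in two $y$-variables. A secondary point is the degenerate range $l^+=1$ or $l^-=1$, where the arm or leg is too short for the genericity used above and some length symbols become negative (and must be read as zero on both sides): there I would simply verify the identity by directly branching the small tableaux, exactly as in \cref{eg:branchingsu2/2}.
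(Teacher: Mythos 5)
Your proposal is correct and follows essentially the same route as the paper, whose entire proof of this lemma is the single sentence that one branches both sides and checks agreement; you have simply made explicit the Littlewood--Richardson/skew-Schur bookkeeping that this check entails. Your identification of the left-hand shape as $(2l^+,2,1^{2l^--2})$, the nine admissible even sub-tableaux, and the target multiset of multiplicities summing to $16$ all match what the paper implicitly relies on (compare \cref{eq:groundlevelcharacterofmassiveaginsu2characters} and \cref{eq:branchingthemantinsymsyt}), so no gap remains beyond carrying out the stated arithmetic.
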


\begin{proof}\label{pf:branchingofnonmaximallyeccentrictableau}
	This is proved simply by branching both sides and checking that they agree.
\end{proof}

\begin{proposition}\label{pro:massivegrounnoneccentricsupertableau}
	The ground level of a unitary irreducible massive representation of \Ag{} described by parameters $k^+,k^-$ and quantum numbers $l^+,l^-$ is described by a single representation of the superalgebra \su{2|2}, which is in turn described by a non-maximally eccentric Young supertableau of eccentricity $(2l^+,2l^-)$, as shown in \cref{fig:nonmaximallyeccentricsupertableauformassiveAg}.
	\begin{figure}[htbp]
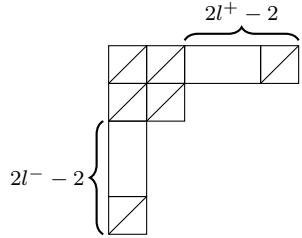

		\centering
			\nonmaxeccsyt{2l^+ - 2}{2l^- - 2}
		\caption{A supertableau which describes the ground level of a representation of \Ag{} with \su{2} quantum numbers $l^+,l^-$.}
		\label{fig:nonmaximallyeccentricsupertableauformassiveAg}
	\end{figure}
\end{proposition}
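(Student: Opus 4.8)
The argument that the ground level is a single \emph{irreducible} representation of \su{2|2} is identical to the massless case treated in \cref{pro:masslessgroundeccentricsupertableau}. Since \su{2|2} is the zero mode subalgebra of \Ag{} (\cref{sub:a_basis_for_mathfrak_su_n_m}), the ground level carries an \su{2|2} representation; were it reducible there would be more than one \su{2|2} highest weight state at ground level, and each such state would generate a full \Ag{} module under the \Ag{} raising operators, contradicting irreducibility of the \Ag{} representation. It therefore remains only to identify the supertableau, and as before I would do so by matching its $\su{2} \otimes \su{2} \otimes \u{1}$ content against the ground level of the massive \Ag{} representation.

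Next I would read off the $\su{2} \otimes \su{2}$ character of the massive ground level from the character formulae reproduced in \cref{sec:character_formulae_for_ag}. The key difference from the massless case is that a massive representation sits strictly above the bound $\hat h = \frac{1}{k}(u^2 + \frac{1}{4})$, so none of the shortening relations exploited in \cref{sub:a_basis_for_mathfrak_su_n_m} survive; the ground level is instead the full long (typical) \su{2|2} module, namely the top $\su{2} \otimes \su{2}$ representation tensored with the exterior algebra of the four lowering fermionic zero modes. Those four modes form a $(\tfrac12,\tfrac12)$ of $\su{2}^+ \otimes \su{2}^-$, whose exterior algebra decomposes as $(0,0) + (\tfrac12,\tfrac12) + (1,0) + (0,1) + (\tfrac12,\tfrac12) + (0,0)$, so the ground multiplet is a definite sum of products $\chi^+_{a}\chi^-_{b}$ generalising \cref{eq:groundlevelcharacterofmasslessaginsu2characters}.

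To convert this into a supertableau I would invoke the lemma proved immediately above. By \cref{lem:branchingofnonmaximallyeccentrictableau}, the branching of the non-maximally eccentric supertableau of eccentricity $(2l^+,2l^-)$ in \cref{fig:nonmaximallyeccentricsupertableauformassiveAg} agrees, as an $\su{2} \otimes \su{2}$ representation, with the sum of the branchings of the two maximally eccentric supertableaux of eccentricities $(2l^+ - 1, 2l^-)$ and $(2l^+, 2l^- - 1)$. By \cref{pro:masslessgroundeccentricsupertableau} these two maximally eccentric tableaux are exactly the ground levels of massless \Ag{} representations with quantum numbers $(l^+ - \tfrac12, l^-)$ and $(l^+, l^- - \tfrac12)$, whose characters are obtained from \cref{eq:groundlevelcharacterofmasslessaginsu2characters} by the respective substitutions $(l^+,l^-) \mapsto (l^+ - \tfrac12, l^-)$ and $(l^+,l^-) \mapsto (l^+, l^- - \tfrac12)$. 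The claim then reduces to the finite character identity that the massive ground character of the previous paragraph equals the sum of these two shifted massless characters, and verifying it term by term completes the identification.

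The main obstacle is precisely the character bookkeeping of the second and third paragraphs: one must extract the massive ground content from \cref{sec:character_formulae_for_ag} with the correct conventions for the half-integer shifts in $l^\pm$, and then check that the long multiplet splits as exactly massless$(l^+ - \tfrac12, l^-)$ plus massless$(l^+, l^- - \tfrac12)$ with no spurious multiplicities. This is the statement that at the unitarity bound a long \Ag{} multiplet decomposes into these two short multiplets; once it is established, everything else is either a verbatim repeat of the massless argument or a direct application of \cref{lem:branchingofnonmaximallyeccentrictableau}.
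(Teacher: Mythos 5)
Your proposal is correct and follows essentially the same route as the paper: the paper likewise quotes the sixteen $\su{2}\otimes\su{2}$ ground multiplets from \cref{sec:character_formulae_for_ag} and then combines \cref{lem:branchingofnonmaximallyeccentrictableau} with the maximally eccentric branching computed in the proof of \cref{pro:masslessgroundeccentricsupertableau} to check agreement with \cref{eq:groundlevelcharacterofmassiveaginsu2characters}. Your reformulation of the final check as the identity ``massive $(l^+,l^-)$ equals massless $(l^+-\tfrac12,l^-)$ plus massless $(l^+,l^--\tfrac12)$'' is a tidy way of packaging the same finite character verification, and it does hold term by term.
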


\begin{proof}\label{pf:massivegrounnoneccentricsupertableau}
	The generic massive Ramond representation of \Ag{} has 16 highest weight states of $\su{2} \otimes \su{2}$, as discussed in \cref{sec:character_formulae_for_ag}. We therefore have that the ground level of \Ag{} is given by
	\begin{equation}\label{eq:groundlevelcharacterofmassiveaginsu2characters}
		\begin{aligned}
			\Ch^{\Ag,R}_{m,l^+,l^-} =& \left( \chi^+_{l^+}\chi^-_{l^- - 1} + 2\chi^+_{l^+ - \frac{1}{2}}\chi^-_{l^- - \frac{1}{2}} + 2\chi^+_{l^+ - \frac{1}{2}}\chi^-_{l^- - \frac{3}{2}} \right. \\
			&+ \left. \chi^+_{l^+ - 1}\chi^-_{l^-} + 4\chi^+_{l^+ - 1}\chi^-_{l^- - 1} + \chi^+_{l^+ - 1}\chi^-_{l^- - 2}\right. \\
			&+ \left. 2\chi^+_{l^+ - \frac{3}{2}}\chi^-_{l^- - \frac{1}{2}} + 2\chi^+_{l^+ - \frac{3}{2}}\chi^-_{l^- - \frac{3}{2}} + \chi^+_{l^+ - 2}\chi^-_{l^- - 1} \right)q^h + \mathcal{O}(q^{h+1}),
		\end{aligned}
	\end{equation}
	
	Using \cref{lem:branchingofnonmaximallyeccentrictableau} and the branching of a maximally eccentric supertableau given in the proof of \cref{pro:masslessgroundeccentricsupertableau}, it is simple to check that the $\su{2} \otimes \su{2}$ multiplets which appear in the branching of the representation of \su{2|2} shown in \cref{fig:nonmaximallyeccentricsupertableauformassiveAg} agree with \cref{eq:groundlevelcharacterofmassiveaginsu2characters}. We therefore see that the ground level of a massive representation of \Ag{} in the Ramond sector with \su{2} charges $l^+$ and $l^-$ is described by the supertableau
	\begin{equation*}
		\nonmaxeccsyt{2l^+ - 2}{2l^- - 2}.
	\end{equation*}
\end{proof}


\subsection{The supersymmetric index $I_1$ for \Ag{}} 
\label{sub:the_supersymmetric_index_i_1_for_ag}

The technique of branching Young supertableaux gives a way to investigate the contributions to a supersymmetric index for \Ag{} known as $I_1$ which was introduced in \cite{gukov2004index}, motivated by the search for a holographic dual to type II string theory on $AdS(3) \times S^3 \times S^3 \times S^1$. 

\begin{definition}\label{def:theleftindexofag}
	Since the massless Ramond characters of \Ag{} have an order one zero at $z_+ = -z_-$ one can form a non-zero index by taking a derivative. Given a theory $\mathcal{D}$, with partition function $Z^{\mathcal{D}}$, the left-index $I_1$ is therefore defined as \cite{gukov2004index}
	\begin{equation}\label{eq:I1oftheory}
		\begin{aligned}
			I_1(\mathcal{D})&(q,z_+,z_-,\bar{q},\bar{z}) := \left.-\bar{z}_+\frac{\partial}{\partial \bar{z}_-}Z^{\mathcal{D}}_{\mathbb{H}^{\tilde{R}}}(q,z_+,z_-,\bar{q},\bar{z}_+,\bar{z}_-)\right|_{\bar{z}_+ = \bar{z}_- = \bar{z}},\\
			&= \Tr_{\mathbb{H}^R} \left(-F_R(-1)^F q^{L_0 - c/24}\bar{q}^{\bar{L}_0 - \bar{c}/24}z_+^{2T_0^{+3}}z_-^{2T_0^{-3}}\bar{z}^{2(\bar{T}_0^{+3} + \bar{T}_0^{-3})}\right),
		\end{aligned}
	\end{equation}
	where ${F_R} := 2\bar{T}^{-3}_0,\, (-1)^{F} := e^{2\pi i (T^{-3}_0 + \bar{T}^{-3}_0)}$, and $Z^{\mathcal{D}}_{\mathbb{H}^{\tilde{R}}}$ denotes the restriction of the partition function to the $\tilde{R}$ sector.
\end{definition}

Since massive characters of \Ag{} have an order two zero at $z_+ = -z_-$, the index $I_1$ is constructed so that only massless representations of \Ag{} can contribute on the right. Massless characters can be shown to contribute to the index as \cite{gukov2004index}
\begin{equation}\label{eq:I1ofmasslessag}
	\left. -z_{+}\frac{d}{dz_{-}} \Ch_{0}^{A_{\gamma}(l^+,l^-), \tilde{R}} \right|_{z_+=z_-} = (-1)^{2l^++1}q^{\frac{u^2}{k}}\Theta_{\mu,k}^-(\omega,\tau),
\end{equation}
where $k = k^+ + k^-$ is the sum of the levels of the affine \sut{}'s, $\mu = 2(l^+ + l^-)-1$, $z=e^{2 \pi i \omega}$ and the odd level-$k$ theta functions are given by
\begin{equation}
	\begin{aligned}
		\Theta^\pm_{\mu,k}(\tau,\omega) :&\!= \Theta_{\mu,k}(\tau,\omega) - \Theta_{-\mu,k}(\tau,\omega),\\
		&= q^{\frac{\mu^2}{4k}} \sum_{n \in \mathbb{Z}}q^{kn^2 + n\mu}(z^{2kn+\mu} - z^{-2kn-\mu}).
	\end{aligned}
\end{equation}

Before commenting on the index $I_1$ for \Ag{}, we note one more useful fact about the branching of \su{2|2} supertableaux.

\begin{proposition}\label{pro:branchingofgenericsupertableaux}
	The generic non-zero totally-contravariant representation of \su{2|2} described by a supertableau as shown below satisfies the following equivalence under branching into $\su{2} \otimes \su{2}$ representations:
	\begin{equation}
		\genericsupertableau{p}{q}{r}{s} \overset{\scriptscriptstyle\su{2}\otimes\su{2}}{\equiv} \genericsupertableau{0}{q}{0}{s}.
	\end{equation}
\end{proposition}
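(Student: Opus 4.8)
The plan is to split the claimed equivalence into two one-sided reductions,
\[
\genericsupertableau{p}{q}{r}{s}\overset{\scriptscriptstyle\su{2}\otimes\su{2}}{\equiv}\genericsupertableau{0}{q}{r}{s}\overset{\scriptscriptstyle\su{2}\otimes\su{2}}{\equiv}\genericsupertableau{0}{q}{0}{s},
\]
and to deduce the second from the first by a transpose symmetry. Writing the fat-hook shape as the partition $\lambda=(2{+}p{+}q,\,2{+}p,\,2^{r},\,1^{s})$, its conjugate is $\lambda'=(2{+}r{+}s,\,2{+}r,\,2^{p},\,1^{q})$, i.e. conjugation acts on the labels by $(p,q,r,s)\mapsto(r,s,p,q)$. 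Because the branching procedure of \cref{sub:branching_rules_for_su_2_2} reflects the odd tableau through its diagonal, and because $M=N=2$ here, transposing the whole supertableau is a symmetry of the $\su{2}\otimes\su{2}$ branching that merely interchanges the two $\su{2}$ factors. Hence the first reduction (erasing the $p$ length-two columns) applied to $\lambda'$ yields, after transposing back, exactly the second reduction (erasing the $r$ length-two rows), and it suffices to prove the first.

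To handle the first reduction I would use the semistandard-supertableau picture underlying \cref{sub:branching_rules_for_su_2_2}: the even entries $\{1,2\}$ fill a sub-Young-diagram $\mu$, which (being strictly increasing down columns) occupies at most the top two rows and contributes the $\su{2}^+$ multiplet determined by $\mu_1-\mu_2$, while the complementary skew shape $\lambda/\mu$ is filled with the odd entries $\{\bar1,\bar2\}$ (strictly increasing along rows, so with rows of length $\le 2$) and contributes the $\su{2}^-$ content of the reflected shape $(\lambda/\mu)'$. For the fat hook this forces $\mu=(p{+}q{+}a,\,p{+}b)$ with $a,b\in\{0,1,2\}$, the overhangs of rows $1$ and $2$ having lengths $2{-}a$ and $2{-}b$. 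I then set up the bijection $\mu\mapsto\mu-(1,1)$ matching the decomposition of $\lambda$ with that of $\lambda^{-}=(1{+}p{+}q,\,1{+}p,\,2^{r},\,1^{s})$, i.e. the step $p\to p-1$, and iterate down to $p=0$.

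The point of this bijection is that it preserves $\mu_1-\mu_2$, hence the $\su{2}^+$ multiplet, and the removed column lies entirely inside $\mu$, where it is forced to be $\tfrac12$ and contributes only the $\su{2}^+$ singlet (a shift of the suppressed \u{1} charge, which $\overset{\scriptscriptstyle\su{2}\otimes\su{2}}{\equiv}$ ignores). Crucially, under $\mu\mapsto\mu-(1,1)$ the whole of rows $1$ and $2$, \emph{together with} their odd overhangs, translates rigidly one column to the left, so the sub-configuration formed by those two rows is unchanged; the only part that does not translate is the fixed width-two band (rows $3,\dots,2{+}r$) and the single-column tail (the $s$ part), both pinned in columns $1,2$. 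As long as $p\ge 2$ the overhangs occupy columns disjoint from this band, the skew shape $(\lambda/\mu)'$ splits into disconnected pieces, its $\su{2}^-$ character factorises, and the $\su{2}^-$ content is manifestly identical for $\lambda$ and $\lambda^{-}$; the length-two rows in the band pair up as $\bar1\bar2$ into $\su{2}^-$ singlets exactly as in \cref{pro:masslessgroundeccentricsupertableau}.

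The main obstacle is the last step(s), $p=1\to 0$, where row $2$'s odd overhang abuts the width-two band in columns $1,2$ and the skew shape becomes genuinely connected, so that the naive factorisation of the $\su{2}^-$ character fails. I would settle these finitely many coupled configurations by direct branching, in the spirit in which \cref{lem:branchingofnonmaximallyeccentrictableau} was verified, evaluating the resulting genuinely skew $\su{2}^-$ pieces with the hook branching \cref{eq:branchingthemantinsymsyt}; small values of $q$ and $s$ (where $\mu_1-\mu_2$ or the tail length is small and the character identifications $\chi_j=\chi_{-j}$ force cancellations) must be checked in the same pass. I expect this boundary bookkeeping, rather than the bulk shift argument, to be where the real work lies.
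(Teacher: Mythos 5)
Your overall route is the same as the paper's: reduce the supertableau one length-two column at a time in the $p$ direction, obtain the $r$ reduction for free from the transpose symmetry $(T_1,T_2)\mapsto(T_2,T_1)$ of the branching, and iterate. Where you differ is the mechanism for the single step $p\to p-1$. The paper writes out the full ten-term branching of the $(p,q,0,0)$ tableau, notes which terms are present under which inequalities on $p$ and $q$, and observes that after cancelling length-two columns the surviving $\su{2}^+$ factors do not depend on $p$; Steps 2--3 then assert that appending the $r$- and $s$-decorations "does not affect the $p$ dependence''. Your bijection $\mu\mapsto\mu-(1,1)$ on the even sub-diagrams, with the observation that the removed column is an $\su{2}^+$ singlet and that the odd skew complement translates rigidly, is a cleaner and more structural version of the same step: it explains \emph{why} the $p$ dependence drops out rather than exhibiting it term by term, and it handles the $r,s$ decorations uniformly via factorisation of the skew character. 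That is a genuine improvement in the bulk.

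The place where your proposal is incomplete is exactly the place you flag, and it is worth being precise about it. The rigid-translation argument needs the row-two overhang to be column-disjoint from the width-two band on \emph{both} sides of the comparison; since the overhang of $\lambda^-$ sits in columns $p+b,\dots,p+1$, the coupled configurations occur for $p-1\le 1$, i.e.\ the steps $2\to1$ \emph{and} $1\to0$ are both problematic (your "$p\ge2$'' should really be "$p\ge3$'', matching the paper's own $p>2$ in \cref{eq:genericbranchingstep1}). Moreover these residual cases are not literally "finitely many configurations'': the connected skew shapes still carry the free parameters $q$, $r$, $s$, so what must be verified is a small family of character identities in the style of \cref{lem:branchingofnonmaximallyeccentrictableau}, not a finite enumeration. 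You should not feel too bad about deferring this: the paper itself states \cref{eq:genericbranchingstep3} only for $p>1$ and then invokes it "$p$ times'' down to $p=0$, so the published proof is silently thin at the same spot (a direct check, e.g.\ $(3,3)$ versus $(2,2)$, confirms the statement does survive the bottom steps). But as written your argument establishes the reduction only down to $p=2$, and the proposition needs $p=0$; until the coupled boundary computations are actually carried out, the proof is a correct strategy with an open finite-family verification rather than a complete proof.
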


\begin{proof}\label{pf:branchingofgenericsupertableaux}
	\textbf{Step 1:} We argue that we have the branching equivalence for $p>2$
	\begin{equation}\label{eq:genericbranchingstep1}
		\genericsupertableau{p}{q}{0}{0} \overset{\scriptscriptstyle\su{2}\otimes\su{2}}{\equiv}  \genericsupertableau{p-1}{q}{0}{0}.
	\end{equation}
	
	To show this, we first calculate the branching of a supertableau of the type shown in \cref{eq:genericbranchingstep1}:
	\begin{equation}
		\begin{aligned}
			\genericsupertableau[0.4]{p}{q}{0}{0} \mapsto& \left( \generictableau[0.4]{p}{q}{0}{0},1 \right) + \left( \generictableau[0.4]{p-1}{q+1}{0}{0}, \yt[0.4]{{1}} \right)^\dagger \\
			& + \left( \generictableau[0.4]{p}{q-1}{0}{0}, \yt[0.4]{{1}} \right)^* + \left( \generictableau[0.4]{p-2}{q+2}{0}{0} , \yt[0.4]{{1,1}} \right)^{\dagger\dagger} \\
			& + \left( \generictableau[0.4]{p}{q-2}{0}{0} , \yt[0.4]{{1,1}} \right)^{**} + \left( \generictableau[0.4]{p-1}{q}{0}{0}, \yt[0.4]{{1,1}} \right)^\dagger \\
			& + \left( \generictableau[0.4]{p-1}{q}{0}{0} , \yt[0.4]{{2}} \right)^\dagger + \left( \generictableau[0.4]{p-1}{q-1}{0}{0}, \yt[0.4]{{2,1}} \right)^{\dagger *} \\
			& + \left( \generictableau[0.4]{p-2}{q+1}{0}{0} , \yt[0.4]{{2,1}} \right)^{\dagger\dagger} + \left( \generictableau[0.4]{p-2}{q}{0}{0}, \yt[0.4]{{2,2}} \right)^{\dagger\dagger},
		\end{aligned}
	\end{equation}
	where we have not simplified trivial columns of two boxes on the right hand side. The representations indicated by $^*$ appear only for $q \ge 1$ and the representation indicated by $^{**}$ appear only for $q \ge 2$. Similarly, the representations indicated by $^\dagger$ only appear for $p \ge 1$ and the representations indicated by $^{\dagger\dagger}$ appear only for $p \ge 2$. Therefore all representations appear when $p \ge 2, q\ge 2$. Since the block of columns of length two may be trivially cancelled for \su{2}, we will get an equivalent set of representations of $\su{2} \otimes \su{2}$ on both sides of \cref{eq:genericbranchingstep1} if $p>2$.
	
	Note that if a supertableau contains the branching component $\left(T_1,T_2\right)$ then the transposed supertableau $T^t$ contains the component $\left(T_2,T_1\right)$. This follows immediately from the symmetric nature of the factors appearing in the branching $SU(M + N) \to SU(M) \otimes SU(N)$ as noted in \cref{sec:branching_su_m_n_to_su_m_times_su_n}. We therefore immediately get the following equivalence for $r>2$ as a corollary to the previous equivalence:
	\begin{equation}
		\genericsupertableau{0}{0}{r}{s} \ \ \overset{\scriptscriptstyle\su{2}\otimes\su{2}}{\equiv} \genericsupertableau{0}{0}{r-1}{s}.
	\end{equation}
	
	\textbf{Step 2:} We argue that the equivalence
	\begin{equation}\label{eq:genericbranchingstep2}
		\genericsupertableau{p}{q}{0}{s} \overset{\scriptscriptstyle\su{2}\otimes\su{2}}{\equiv} \genericsupertableau{p-1}{q}{0}{s}.
	\end{equation}
	for $p>1$ follows easily from \cref{eq:genericbranchingstep1}. This is done by noting that the column of length $s$ must be moved to the right hand factor of \su{2} and transposed, otherwise the left hand factor of \su{2} will have a column of length $>3$ and hence will be a zero representation. Clearly the result of moving this column over to the right hand factor, taking the appropriate tensor products where necessary, does not affect the $p$ dependence of the branching. It is therefore clear that after cancelling trivial columns of length two, the branching of the two sides of \cref{eq:genericbranchingstep2} agree as long as we have $p>1$.
	
	\textbf{Step 3:} This previous step can trivially be extended to give the branching equivalence
	\begin{equation}\label{eq:genericbranchingstep3}
		\genericsupertableau{p}{q}{r}{s} \overset{\scriptscriptstyle\su{2}\otimes\su{2}}{\equiv} \genericsupertableau{p-1}{q}{r}{s} ,
	\end{equation}
	for $p>1$ using the same argument as for the previous step, except for now we must clearly take the two columns of lengths $r+s$ and $s$ to the right hand side, again taking tensor products where necessary. Again, this will not affect the $p$ dependence of the branching and so after cancelling trivial columns of length two, the branching of the two sides of \cref{eq:genericbranchingstep3} agree as long as we have $p>1$.
	
	\textbf{Step 4:} Finally, we use the argument from the end of step 1 to obtain the equivalence
	\begin{equation}\label{eq:genericbranchingstep4}
		\genericsupertableau{p}{q}{r}{s} \overset{\scriptscriptstyle\su{2}\otimes\su{2}}{\equiv} \genericsupertableau{p}{q}{r-1}{s},
	\end{equation}
	for $r>1$.
	
	Using Step 3 $p$ times and Step 4 $r$ times we now obtain
	\begin{equation}
		\genericsupertableau{p}{q}{r}{s} \overset{\scriptscriptstyle\su{2}\otimes\su{2}}{\equiv} \genericsupertableau{0}{q}{0}{s},
	\end{equation}
	as required.
	
\end{proof}

We can now finally calculate the index $I_1$ for all totally covariant supertableaux as appear in our decompositions of \Ag{} representations.

\begin{proposition}\label{pro:indexofmaximallyeccentricsupertableaux}
	\begin{equation}
		I_1 \left( \msymnantisyt{m}{n} \right) = (-1)^n \left(z^{-m-n+1} - z^{m+n-1} \right).
	\end{equation}
\end{proposition}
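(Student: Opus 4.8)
The plan is to express the index of a single \su{2|2} multiplet as a differential operation on its supercharacter, and then to exploit the shortness (atypicality) of the maximally eccentric representation, which forces that supercharacter to factor.

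Following \cref{def:theleftindexofag} and the reduction \eqref{eq:I1ofmasslessag}, I read off the contribution of an \su{2|2} multiplet $T$ to the index as $I_1(T) = \left.-z_+\frac{\partial}{\partial z_-}\SCh(T)\right|_{z_+=z_-=z}$, where $\SCh(T)=\STr_T(z_+^{2T_0^{+3}}z_-^{2T_0^{-3}})$ is the supercharacter. The supertrace supplies the $(-1)^F$ of \cref{def:theleftindexofag}: since the superparity of a state equals $(-1)^{2T_0^{-3}}$, each branched $\su{2}\otimes\su{2}$ multiplet $(\nsymyt{a},\nsymyt{b})$ enters $\SCh(T)$ with the sign $(-1)^b$. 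The $\su{2}\otimes\su{2}$ content of $\msymnantisyt{m}{n}$ that I need is exactly the branching established in the proof of \cref{pro:masslessgroundeccentricsupertableau}, i.e.\ \eqref{eq:groundlevelcharacterofmasslessaginsu2characters}.

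The heart of the argument is to collect these eight signed terms into a factored form. Writing $s_\pm=z_\pm+z_\pm^{-1}$ and using only the elementary $\su{2}$ recursion $\chi_{l+\frac12}+\chi_{l-\frac12}=(z+z^{-1})\chi_l$ — once on the $z_+$-characters to simplify the coefficient of each $\su{2}^-$ factor, and once on the $z_-$-characters — I expect the supercharacter to collapse to
\begin{equation*}
\SCh\left(\msymnantisyt{m}{n}\right) = (-1)^n (s_- - s_+)\left[\chi_{\frac{m-1}{2}}(z_+)\chi_{\frac{n-1}{2}}(z_-) - \chi_{\frac{m-2}{2}}(z_+)\chi_{\frac{n-2}{2}}(z_-)\right].
\end{equation*}
The factor $s_--s_+$ vanishes on $z_+=z_-$; this is the atypicality of the maximally eccentric (short) representation, and the reason the undifferentiated index is zero and a derivative is needed. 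This regrouping is the step I expect to be fiddliest — the main obstacle is purely the bookkeeping of the signed eight-term sum, which I would organize by pulling out the $z_-$-characters before applying the recursions.

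Finally I differentiate on the diagonal. Writing $\SCh(T)=(s_--s_+)\Psi$, and noting that $s_--s_+$ vanishes at $z_+=z_-=z$ while $\partial_{z_-}(s_--s_+)\big|_{z}=1-z^{-2}$, the product rule gives $I_1 = -z(1-z^{-2})\Psi(z,z) = -(z-z^{-1})\Psi(z,z)$, with $\Psi(z,z)=(-1)^n\big[\chi_{\frac{m-1}{2}}\chi_{\frac{n-1}{2}}-\chi_{\frac{m-2}{2}}\chi_{\frac{n-2}{2}}\big](z)$. The Clebsch--Gordan product rule makes every intermediate spin cancel, leaving only the top one, $\chi_{\frac{m-1}{2}}\chi_{\frac{n-1}{2}}-\chi_{\frac{m-2}{2}}\chi_{\frac{n-2}{2}}=\chi_{\frac{m+n-2}{2}}$, and since $(z-z^{-1})\chi_{\frac{m+n-2}{2}}=z^{m+n-1}-z^{-(m+n-1)}$ I obtain $I_1 = -(-1)^n\big(z^{m+n-1}-z^{-m-n+1}\big)=(-1)^n(z^{-m-n+1}-z^{m+n-1})$, as claimed. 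I would close by checking the degenerate rows and columns ($m=1$ or $n=1$), where some branched labels become non-positive: with the convention $\chi_{-1/2}=0$ the factored form and the final formula are unchanged.
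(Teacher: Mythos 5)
Your proposal is correct and follows essentially the same route as the paper's proof: branch the maximally eccentric tableau into $\su{2}\otimes\su{2}$ characters, insert the parity signs to form the supercharacter, factor out the $(\chi_{\frac12}(z_+)-\chi_{\frac12}(z_-))$ piece that vanishes on the diagonal, and let the derivative land on that factor; your factored form matches \cref{eq:supercharacterofsupertableauinsu2characters} exactly. The only cosmetic differences are that you establish $\chi_{\frac{m-1}{2}}\chi_{\frac{n-1}{2}}-\chi_{\frac{m-2}{2}}\chi_{\frac{n-2}{2}}=\chi_{\frac{m+n}{2}-1}$ by Clebsch--Gordan telescoping rather than the explicit rational identity for $\chi_l$, and that you add a (welcome) check of the degenerate $m=1$, $n=1$ cases.
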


\begin{proof}\label{pf:indexofmaximallyeccentricsupertableaux}
	In the proof of \cref{pro:masslessgroundeccentricsupertableau} we calculated the branching of a maximally extremal supertableau into $\su{2} \otimes \su{2}$ representations and checked that the \su{2} characters contained in this branching agree with the \su{2} characters that appear in a massless representation of \Ag{} at the ground level as given in \cref{eq:groundlevelcharacterofmasslessaginsu2characters}. We therefore have
	\begin{equation}\label{eq:characterofsupertableauinsu2characters}
		\begin{aligned}
			\Ch{\left( \msymnantisyt{m}{n} \right)} =& \chi^+_{\frac{m}{2}}\chi^-_{\frac{n-1}{2}} + \chi^+_{\frac{m-1}{2}}\chi^-_{\frac{n}{2}} + 2\chi^+_{\frac{m-1}{2}}\chi^-_{\frac{n}{2} - 1} \\
			&+ 2\chi^+_{\frac{m}{2} - 1}\chi^-_{\frac{n-1}{2}} + \chi^+_{\frac{m}{2} - 1}\chi^-_{\frac{n-3}{2}} + \chi^+_{\frac{m-3}{2}}\chi^-_{\frac{n}{2} - 1}.
		\end{aligned}
	\end{equation}
	In this sense, we think of the supertableaux as describing the representation content of \Ag{} in the Ramond sector. Recall that the contribution to the index of a representation of \Ag{} is given by
	\begin{equation}\label{eq:I1IndexDefinition}
		I_1 \left(\Ch^{\Ag{},R}\right) := \left. -z_+ \frac{\partial{}}{\partial{z_-}} \Ch^{\Ag{},\tilde{R}} \right|_{z_-=z_+ \equiv z},
	\end{equation}
	therefore to calculate the index we need to flow to the $\tilde{R}$ sector, that is to consider the supercharacter rather than the character of the representation of \su{2|2},
	\begin{equation}
		\SCh{\left( \msymnantisyt[0.4]{m}{n} \right)}(z_+,z_-) := \Ch {\left( \msymnantisyt[0.4]{m}{n} \right)}(z_+,-z_-).
	\end{equation}
	By some straightforward algebra one then obtains,
	\begin{equation}\label{eq:supercharacterofsupertableauinsu2characters}
		\begin{aligned}
			\SCh&{\left( \msymnantisyt[0.4]{m}{n} \right)}(z_+,z_-) = \left( \chi_{\frac{1}{2}}(z_+) - \chi_{\frac{1}{2}}(z_-) \right) \\
			& \hspace{3em} \left( (-1)^{n-1}\chi_{\frac{m-1}{2}}(z_+)\chi_{\frac{n-1}{2}}(z_-) + (-1)^n\chi_{\frac{m}{2}-1}(z_+)\chi_{\frac{n}{2}-1}(z_-) \right).
		\end{aligned}
	\end{equation} 
	
	The index $I_1$ as defined in \cref{eq:I1IndexDefinition} is evaluated at $z_+=z_-$ and clearly we have \\$\left.\left( \chi_{\frac{1}{2}}(z_+) - \chi_{\frac{1}{2}}(z_-) \right)\right|_{z_+=z_-}=0$. Therefore we need only consider the term where the differential $\frac{\partial}{\partial{z_-}}$ is applied to this zero. We therefore have
	\begin{equation}\label{eq:IndexOfSupertableau1}
			I_1 \left( \msymnantisyt[0.4]{m}{n} \right) = (-1)^{n+1}\left( z^{-1} - z \right) \left( \chi_{\frac{m}{2}-1}(z)\chi_{\frac{n}{2}-1}(z) - \chi_{\frac{m-1}{2}}(z)\chi_{\frac{n-1}{2}}(z) \right).
	\end{equation}
	
	We now use the identity
	\begin{equation}
		\chi_l(z)=\frac{z^{-2l}-z^{2(l+1)}}{1-z^2},
	\end{equation}
	to show that
	\begin{equation}
			\left( \chi_{\frac{m}{2}-1}(z)\chi_{\frac{n}{2}-1}(z) - \chi_{\frac{m-1}{2}}(z)\chi_{\frac{n-1}{2}}(z) \right) = - \chi_{\frac{m+n}{2}-1}(z).
	\end{equation}
	
	Substituting this into \cref{eq:IndexOfSupertableau1} we finally obtain
	\begin{equation}
		\begin{aligned}
			I_1 \left( \msymnantisyt[0.4]{m}{n} \right) &= (-1)^{n}\left( z^{-1} - z \right) \chi_{\frac{m+n}{2}-1}(z), \\
			&= (-1)^{n} \left( z^{-m-n+1} - z^{m+n-1} \right).
		\end{aligned}
	\end{equation}
	
\end{proof}

We now have the immediate corollary due to \cref{lem:branchingofnonmaximallyeccentrictableau}.
\begin{corollary}\label{cor:indexofmassivesupertableau}
	\begin{equation}
		I_1 \left( \genericsupertableau{0}{m}{0}{n} \right) = 0.
	\end{equation}
\end{corollary}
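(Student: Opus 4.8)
The plan is to avoid evaluating $I_1$ on the massive tableau $\genericsupertableau{0}{m}{0}{n}$ directly, and instead to reduce it to the maximally eccentric case already handled in \cref{pro:indexofmaximallyeccentricsupertableaux}. The structural fact I would lean on is that $I_1$, as defined in \cref{eq:I1IndexDefinition}, is manufactured from the supercharacter of the \su{2|2} representation, which depends only on its $\su{2} \otimes \su{2}$ branching content. Hence $I_1$ is insensitive to the relation $\overset{\scriptscriptstyle\su{2}\otimes\su{2}}{\equiv}$ and is additive across the direct sums appearing on the right of such equivalences; in particular one may replace the massive tableau by anything to which it is $\su{2} \otimes \su{2}$-equivalent before computing the index.

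First I would invoke \cref{lem:branchingofnonmaximallyeccentrictableau}, identifying its parameters via $m = 2l^+ - 2$ and $n = 2l^- - 2$, to rewrite the non-maximally eccentric tableau as a sum of two maximally eccentric tableaux under branching:
\begin{equation*}
	\genericsupertableau{0}{m}{0}{n} \overset{\scriptscriptstyle\su{2}\otimes\su{2}}{\equiv} \msymnantisyt{m+1}{n+2} + \msymnantisyt{m+2}{n+1}.
\end{equation*}
By the additivity noted above, this reduces the claim to evaluating the index on the two maximally eccentric tableaux on the right, for which \cref{pro:indexofmaximallyeccentricsupertableaux} gives that the index of a maximally eccentric tableau of eccentricity $(a,b)$ equals $(-1)^{b}\left(z^{-a-b+1} - z^{a+b-1}\right)$.

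The decisive point is the bookkeeping of the two resulting terms. Both maximally eccentric tableaux have the same total eccentricity, since $(m+1)+(n+2) = (m+2)+(n+1)$, so the Laurent polynomials $z^{-a-b+1} - z^{a+b-1}$ produced by \cref{pro:indexofmaximallyeccentricsupertableaux} coincide for the two summands; meanwhile their vertical eccentricities, $n+2$ and $n+1$, differ by exactly one, so the prefactors $(-1)^{n+2}$ and $(-1)^{n+1}$ are opposite in sign. The two contributions therefore cancel identically, giving $I_1\left(\genericsupertableau{0}{m}{0}{n}\right)=0$.

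There is no real analytic obstacle here, since the statement is essentially a one-line consequence of the two cited results; the only thing to be careful about is precisely this matching, namely that the \emph{total} box-lengths agree (forcing equal $z$-dependence) while the \emph{vertical} lengths differ by exactly one (forcing the sign flip), both of which are guaranteed by the shape of the right-hand side of \cref{lem:branchingofnonmaximallyeccentrictableau}. As a consistency check, this vanishing is exactly what one expects from the order-two zero of the massive \Ag{} characters at $z_+ = -z_-$ remarked upon before \cref{def:theleftindexofag}, since a single $z_-$-derivative evaluated at the coincidence locus cannot extract a non-zero value from a double zero.
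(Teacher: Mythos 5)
Your proof is correct and takes essentially the same route as the paper: both decompose the non-maximally eccentric tableau via \cref{lem:branchingofnonmaximallyeccentrictableau} into the two maximally eccentric tableaux of eccentricities $(m+2,n+1)$ and $(m+1,n+2)$, apply \cref{pro:indexofmaximallyeccentricsupertableaux} to each, and observe that the equal total eccentricities and opposite sign prefactors force the two contributions to cancel. Your closing remark relating the vanishing to the order-two zero of the massive characters at $z_+=-z_-$ is a nice consistency check not spelled out in the paper's proof.
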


\begin{proof}\label{pf:indexofmassivesupertableau}
	Since the index at a given level of \Ag{} is dependent only on the $\su{2} \otimes \su{2}$ information, we simply use the branching of the supertableau in \cref{lem:branchingofnonmaximallyeccentrictableau} to obtain
	\begin{equation}
		\begin{aligned}
			I_1 & \left( \genericsupertableau[0.4]{0}{m}{0}{n} \right) = I_1 \left( \msymnantisyt[0.4]{m+2}{n+1} \right) + I_1 \left( \msymnantisyt[0.4]{m+1}{n+2} \right), \\
			& \hspace{3em} = (-1)^{n+1}\left( z^{-m-n-2} - z^{m+n+2} \right) + (-1)^n \left( z^{-m-n-2} - z^{m+n+2} \right), \\
			& \hspace{3em} = 0.
		\end{aligned}
	\end{equation}
\end{proof}

We also have the following corollary due to \cref{pro:branchingofgenericsupertableaux}.
\begin{corollary}\label{cor:indexofgenericsupertableau}
	\begin{equation}
		I_1 \left( \genericsupertableau[0.4]{p}{q}{r}{s} \right) = 0,
	\end{equation}
\end{corollary}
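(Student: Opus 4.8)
The plan is to obtain this corollary as an immediate composition of \cref{pro:branchingofgenericsupertableaux} with \cref{cor:indexofmassivesupertableau}. The crucial observation---already exploited in the proof of \cref{cor:indexofmassivesupertableau}---is that $I_1$ depends only on the $\su{2} \otimes \su{2}$ branching content of a supertableau: by \cref{eq:I1IndexDefinition} the index is computed entirely from the \su{2} characters $\chi^\pm_l$ appearing in the branching, so it factors through the $\su{2} \otimes \su{2}$ decomposition. Consequently, any two supertableaux that are equivalent under branching into $\su{2} \otimes \su{2}$ representations must carry the same index.

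First I would invoke \cref{pro:branchingofgenericsupertableaux} to replace the generic tableau by its reduced form. Since the branching equivalence established there,
\begin{equation*}
	\genericsupertableau[0.4]{p}{q}{r}{s} \overset{\scriptscriptstyle\su{2}\otimes\su{2}}{\equiv} \genericsupertableau[0.4]{0}{q}{0}{s},
\end{equation*}
identifies the full $\su{2} \otimes \su{2}$ content of the two tableaux, the factoring just described yields
\begin{equation*}
	I_1 \left( \genericsupertableau[0.4]{p}{q}{r}{s} \right) = I_1 \left( \genericsupertableau[0.4]{0}{q}{0}{s} \right).
\end{equation*}
Then I would apply \cref{cor:indexofmassivesupertableau}, which already records that the index of the reduced non-maximally eccentric tableau on the right vanishes, to conclude that $I_1$ of the generic tableau is $0$.

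I expect no real obstacle here, as the corollary is a one-line chaining of the two preceding results. The only point meriting an explicit sentence of justification is the passage from a branching \emph{equivalence} of tableaux to an \emph{equality} of their indices; this rests solely on the fact that $I_1$ is a function of the $\su{2} \otimes \su{2}$ branching data alone, which was implicitly used when evaluating $I_1$ in the proof of \cref{cor:indexofmassivesupertableau}. Everything else is a direct substitution, so no further computation is required.
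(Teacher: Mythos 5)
Your proposal is correct and matches the paper exactly: the paper also obtains this corollary as an immediate consequence of \cref{pro:branchingofgenericsupertableaux} and \cref{cor:indexofmassivesupertableau}. Your extra sentence justifying why a branching equivalence implies equality of indices (because $I_1$ depends only on the $\su{2}\otimes\su{2}$ content) is a welcome clarification of a step the paper leaves implicit.
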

which follows immediately from \cref{cor:indexofmassivesupertableau} and \cref{pro:branchingofgenericsupertableaux}.

Finally, since they give zero representations, we clearly have
\begin{equation}
	I_1 \left( \zerosupertableau[0.4]{p}{q}{r}{s} \right) = 0,
\end{equation}
as well as
\begin{equation}
	I_1 (T) = 0,
\end{equation}
for any tableau larger than those already considered.

Since they are the only supertableaux with non-zero index, we now see that the only contributions to $I_1$ from representations of \Ag{} come from these maximally eccentric representations of the zero mode subalgebra \su{2|2}. The index $I_1$ is therefore counting the maximally eccentric representations of \su{2|2} which appear in the decomposition of an \Ag{} module as an infinite-dimensional graded \su{2|2} module.


\subsection{Spectral flow orbits} 
\label{sub:spectral_flow_orbits}
Recall that the index $I_1$ counts only right-moving massless Ramond representations of \Ag{} and specifically that the index applied to such representations gives an odd level-$k$ theta function as in \cref{eq:I1ofmasslessag}. Unlike the elliptic genus for \N{2} or \N{4} theories, which counted right moving massless representations simply by their Witten index, the index $I_1$ is a function of $q$ (more precisely of $\bar{q}$, since the index is applied to the partition function as in \cref{def:theleftindexofag} rather than just characters of \Ag{}), and hence receives contributions from states throughout the massless representation. We can understand the nature of these states by considering their charges. By \cref{def:theleftindexofag}, the power of $z$ in \cref{eq:I1ofmasslessag} is the charge of the state under $2(T^{+3}_0 + T^{-3}_0)$. \Cref{eq:I1ofmasslessag} then tells us that the states counted by $I_1$ have
\begin{equation}\label{eq:I1statesisospins}
	2(T^{+3}_0 + T^{-3}_0) = \pm\mu\ (\bmod 2k),
\end{equation}
where $\mu = 2(l^+ = l^-) -1$, the Witten index of the underlying representation of \Agt{} (see \Cref{sec:character_formulae_for_ag}). Similarly, the power of $q$ in \cref{eq:I1ofmasslessag} tells us the charge of the states under $L_0 - \frac{c}{24}$. We therefore have
\begin{equation}\label{eq:I1masslessbound}
	L_0 - \frac{c}{24} = \frac{u^2}{k} + \frac{1}{k}\left( T^{+3}_0 + T^{-3}_0 \right)^2.
\end{equation}
When applied to the hws, we recognise \cref{eq:I1masslessbound} as the condition for the representation to be massless. The states counted by $I_1$ therefore satisfy the massless condition in terms of their own charges. In particular, this means that the operator $Q_0^{-K}G_0^{-K}$ annihilates all states counted by the index, as can be seen by considering the norm of the state $Q_0^{-K}G_0^{-K}\ket{\chi}$, for a state $\ket{\chi}$ which extremises $\left( T^{+3}_0 + T^{-3}_0 \right)^2$ at any level.

As noted by \cite{saulina2005geometric}, the conditions given in \cref{eq:I1statesisospins,eq:I1masslessbound} are invariant under the `symmetric' spectral flow automorphism of \cite{defever1988moding}, under which we have
\begin{equation}\label{eq:agsymmetricspectralflow2n}
	\begin{aligned}
		L_0^{2n,2n} &= L_0 - 2n(T_0^{+3} + T_0^{-3}) + kn^2,\\
		T_0^{+3;2n,2n} &= T_0^{+3} - nk^+,\quad	T_0^{-3;2n,2n} = T_0^{-3} - nk^-,
	\end{aligned}
\end{equation}
as well as shifts in the other generators which are unimportant for what follows. We therefore realise that each state counted by $I_1$ can be thought of as the image under symmetric spectral flow (for some $n$) of the states counted at the ground level, namely $\ket{\Omega_+},\ G^{-K}\ket{\Omega_+} \equiv \ket{\Omega_-},$ and the two states whose $\su{2}^\pm$ charges are the negatives of $\ket{\Omega_+}$ and $\ket{\Omega_-}$, which we'll call $\ket{-\Omega_+}$ and $\ket{-\Omega_-}$ respectively. These are the states shown in \cref{fig:Images_Ag0-characters_massless3-2-1:2-1:2-ground} for the massless representation of \Ag{} with $k^+=3$, $k^-=2$, and $l^\pm=\frac{1}{2}$, as these are the only ground states in this representation.

By considering the $\su{2}\otimes\su{2}$ content of the maximally and non-maximally eccentric tableaux as discussed in \cref{sub:branching_a_generic_representation_of_ag}, it is clear that those states which satisfy the massless condition of \cref{eq:I1masslessbound} and hence are annihilated by $Q_0^{-K}G_0^{-K}$, must lie in \su{2|2} representations described by maximally eccentric tableaux. Since they are the only representations of \su{2|2} with non-vanishing index, this confirms the result that the only states counted by $I_1$ are the `massless' states satisfying \cref{eq:I1masslessbound}.

We can now use the fact that the states counted by $I_1$ are the spectral flow orbits of the ground states $\ket{\pm\Omega_+},\ket{\pm\Omega_-}$ to identify all the \su{2|2} representations which contribute to the index of a representation of \Ag{} with \su{2} charges $l^+$ and $l^-$. By \cref{pro:masslessgroundeccentricsupertableau}, the ground states are contained in a maximally eccentric \su{2|2} representation of eccentricity $(2l^+,2l^-)$. Under the symmetric spectral flow given by \cref{eq:agsymmetricspectralflow2n} with $n<0$, the state $\ket{-\Omega_-}$ flows to an \su{2|2} highest weight state at level $n\mu+kn^2$ with charges $-l^+ + \frac{1}{2} - n k^+$ and $-l^- -n k^-$ under $\su{2}^+$ and $\su{2}^-$ respectively. Since the massless condition is preserved under spectral flow, and such massless states must lie in representations of \su{2|2} described by maximally eccentric tableau, by \cref{eq:supercharacterofsupertableauinsu2characters} this is the highest weight state for an \su{2|2} representation with eccentricity $(-2l^+ + 1 -2nk^+,-2l^- +1 -2nk^-)$. Similarly, under symmetric spectral flow with $n<0$, the state $\ket{\Omega_+}$ flows to an \su{2|2} highest weight state at level $-n\mu+kn^2$ with charges $l^+ - n k^+$ and $l^- - \frac{1}{2} -n k^-$ under $\su{2}^+$ and $\su{2}^-$ respectively. This is the highest weight state for an \su{2|2} representation with eccentricity $(2l^+ -2nk^+,2l^- -2nk^-)$. These \su{2|2} representations, which contain the spectral flow orbits of the states $\ket{\Omega_+}$ and $\ket{-\Omega_-}$ as highest weight states, are therefore the only \su{2|2} representations which contribute to the index $I_1$.



\section{Conclusion} 
\label{sec:conclusion}

In this paper we have shown that the zero mode subalgebra of the `Large' \N{4} superconformal algebra \Ag{} in the Ramond sector is the finite superalgebra \su{2|2} (\cref{sec:the_zero_mode_subalgebra_of_ag_in_the_ramond_sector}) and we described the process for decomposing a Ramond representation of \Ag{} as an infinite-dimensional graded module of this zero mode subalgebra; this process is described in \cref{sec:describing_representations_of_ag_using_young_supertableaux}.

As an application of this technique, we use this decomposition to develop our understanding of the states counted by a supersymmetric index for \Ag{}, $I_1$ introduced in \cite{gukov2004index}. This index generalises the elliptic genus of \N{2} and small \N{4} theories, to theories with \Ag{} symmetry. Since the elliptic genus of \N{4} theories admits a moonshine -- connecting small \N{4} theories to the representation theory of the sporadic group $M_{24}$ -- the index $I_1$ is a natural place for a potential \Ag{} moonshine to appear. By considering the branching of \su{2|2} to the even subalgebra $\su{2}\otimes \su{2}$, we find that the only representations of \su{2|2} which contribute to the index are described by the maximally eccentric Young supertableaux (\cref{def:supertableaueccentricity}). These supertableaux are shown to be those which describe the ground states of massless representations of \Ag{} in \cref{pro:masslessgroundeccentricsupertableau}. Since the $\u{1}$ charge of a representation of \su{2|2} is given by the number of boxes in the Young supertableau describing the representation, we see that the highest weight states of these representations of \su{2|2} are those for which the sum of the \su{2} charges is maximal for the given $\u{1}$ charge. This corresponds to the fact that the charges of all the states counted by $I_1$ satisfy the massless condition for \Ag{} \cite{Petersen1990characters1}, as can be seen from the theta function in \ref{eq:I1ofmasslessag}, which describes the contribution to $I_1$ of a massless representation of \Ag{} \cite{gukov2004index}. Furthermore, using the spectral flow isomorphism of \cite{defever1988moding}, we have been able to identify all the maximally eccentric representations of \su{2|2} which appear in the decomposition of \Ag{} as a graded \su{2|2} module in \cref{sub:spectral_flow_orbits}.

Since the zero mode subalgebra of the small \N{4} algebra in the Ramond sector is described by \su{2|1} \cite{Sevrin1988superconformal}, one could also use the techniques of this paper to study representations of the small \N{4} algebra. It seems unlikely that one could learn much about the elliptic genus of such theories however, as the right moving \N{4} representations which contribute to this index do so only through their ground states (by their Witten index), which will always be described by a single representation of \su{2|1} and hence by a single Young supertableau.


\appendix

\section{The representation theory of \Ag{}} 
\label{sec:character_formulae_for_ag}

We present here the characters for the `large' \N{4} algebra, \Ag{} first discovered in \cite{Sevrin1988extendedII} with the character formulae first appearing in \cite{Petersen1990characters1,Petersen1990characters2}. This algebra contains the energy-momentum operator $T(z)$ of conformal dimension 2, four supercurrents $G^a(z)$ of dimension $\frac{3}{2}$, as well as seven operators of dimension 1 forming an $\sut[k^+]{+} \otimes \sut[k^-]{-} \otimes \widehat{\u{1}}$ Kac-Moody subalgebra and four operators of dimension $\frac{1}{2}$.

In the Ramond sector, the charges of the algebra must satisfy a unitarity bound given by
\begin{equation}
	kh_{\Omega_R} \ge \left( l^+_{\Omega_R} + l^-_{\Omega_R} \right) + u_{\Omega_R}^2 + \frac{k^+k^-}{4},
\end{equation}
where $h_{\Omega_R}$ is the conformal weight of the highest weight state $\ket{\Omega_R}$, $l^\pm_{\Omega_R}$ are the $\su{2}^\pm$ charges of $\ket{\Omega_R}$, and $\ket{\Omega_R}$ has charge $-iu$ under the zero mode of the $\widehat{\u{1}}$. This bound comes from considering the norm of the state $Q_0^{-K}G_0^{-K}\ket{\Omega_R}$.
Representations of \Ag{} are known as massless or `short' when this bound is saturated and massive or `long' otherwise. When the massless bound is saturated, the generic sixteen $\su{2} \times \su{2}$ hws which exist in the massive representation \cite{Petersen1990characters1} and which are shown in \cref{fig:genericmassivestatesramond} are reduced to eight $\su{2} \times \su{2}$ hws shown in \cref{fig:genericmasslessstatesramond}. As can clearly be seen, there is no state with both the maximal \sut{+} charge and maximal \sut{-} charge. We therefore build the representation on the state $\ket{\Om{+}}$ which is the state with greatest \sut{+} charge and the top of its \sut{-} multiplet.

\begin{figure}[htbp]
	\begin{subfigure}{0.45\textwidth}
		\centering
			\begin{tikzpicture}[scale=0.4]
				\coordinate (y) at (0,10);
				\coordinate (x) at (10,0);
				\coordinate (lp) at (9,0);
				\coordinate (lp-1/2) at (7,0);
				\coordinate (lp-1) at (5,0);
				\coordinate (lp-3/2) at (3,0);
				\coordinate (lp-2) at (1,0);
				\coordinate (lm) at (0,9);
				\coordinate (lm-1/2) at (0,7);
				\coordinate (lm-1) at (0,5);
				\coordinate (lm-3/2) at (0,3);
				\coordinate (lm-2) at (0,1);
				\draw[<->] (y) node[above]{$l^-$} -- (lm) node[left]{$l^-_R$} -- (lm-1/2) node[left]{$l^-_R-\frac{1}{2}$} -- (lm-1) node[left]{$l^-_R-1$} -- (lm-3/2) node[left]{$l^-_R-\frac{3}{2}$} -- (lm-2) node[left]{$l^-_R-2$} -- (0,0) -- (lp-2) node[below left,rotate=90,yshift=0.35cm]{$l^+_R-2$} -- (lp-3/2) node[below left,rotate=90,,yshift=0.35cm]{$l^+_R-\frac{3}{2}$} -- (lp-1) node[below left,rotate=90,,yshift=0.35cm]{$l^+_R-1$} -- (lp-1/2) node[below left,rotate=90,,yshift=0.35cm]{$l^+_R-\frac{1}{2}$} -- (lp) node[below left,rotate=90,,yshift=0.35cm]{$l^+_R$} -- (x) node[right]{$l^+$};
				\filldraw[blue] (9,5) circle (2pt) node[right,black]{$\ket{\Omega_+}$} node[below,black]{1};
				\filldraw[blue] (5,9) circle (2pt) node[right,black]{$\ket{\Omega_-}$} node[below,black]{1};
				\filldraw[blue] (1,5) circle (2pt) node[below,black]{1};
				\filldraw[blue] (5,1) circle (2pt) node[below,black]{1};
				\filldraw[blue] (5,5) circle (2pt) node[below,black]{4};
				\filldraw[magenta] (7,3) circle (2pt) node[below,black]{2};
				\filldraw[magenta] (3,7) circle (2pt) node[below,black]{2};
				\filldraw[magenta] (7,7) circle (2pt) node[below,black]{2};
				\filldraw[magenta] (3,3) circle (2pt) node[below,black]{2};
			\end{tikzpicture}
		\caption{The sixteen $\su{2} \times \su{2}$ Ramond hws in a generic massive \Ag{} representation}
		\label{fig:genericmassivestatesramond}
	\end{subfigure}
	\hspace{0.5cm}
	\begin{subfigure}{0.45\textwidth}
		\centering
			\begin{tikzpicture}[scale=0.4]
				\coordinate (y) at (0,10);
				\coordinate (x) at (10,0);
				\coordinate (lp) at (9,0);
				\coordinate (lp-1/2) at (7,0);
				\coordinate (lp-1) at (5,0);
				\coordinate (lp-3/2) at (3,0);
				\coordinate (lp-2) at (1,0);
				\coordinate (lm) at (0,9);
				\coordinate (lm-1/2) at (0,7);
				\coordinate (lm-1) at (0,5);
				\coordinate (lm-3/2) at (0,3);
				\coordinate (lm-2) at (0,1);
				\draw[<->] (y) node[above]{$l^-$} -- (lm) node[left]{$l^-_R$} -- (lm-1/2) node[left]{$l^-_R-\frac{1}{2}$} -- (lm-1) node[left]{$l^-_R-1$} -- (lm-3/2) node[left]{$l^-_R-\frac{3}{2}$} -- (lm-2) node[left]{$l^-_R-2$} -- (0,0) -- (lp-2) node[below left,rotate=90,yshift=0.35cm]{$l^+_R-2$} -- (lp-3/2) node[below left,rotate=90,yshift=0.35cm]{$l^+_R-\frac{3}{2}$} -- (lp-1) node[below left,rotate=90,yshift=0.35cm]{$l^+_R-1$} -- (lp-1/2) node[below left,rotate=90,yshift=0.35cm]{$l^+_R-\frac{1}{2}$} -- (lp) node[below left,rotate=90,yshift=0.35cm]{$l^+_R$} -- (x) node[right]{$l^+$};
				\filldraw[blue] (9,7) circle (2pt) node[right,black]{$\ket{\Omega_+}$} node[below,black]{1};
				\filldraw[blue] (5,3) circle (2pt) node[below,black]{1};
				\filldraw[blue] (5,7) circle (2pt) node[below,black]{2};
				\filldraw[magenta] (7,5) circle (2pt) node[below,black]{2};
				\filldraw[magenta] (7,9) circle (2pt) node[below,black]{1};
				\filldraw[magenta] (3,5) circle (2pt) node[below,black]{1};
			\end{tikzpicture}
		\caption{The eight $\su{2} \times \su{2}$ Ramond hws in a generic massless \Ag{} representation}
		\label{fig:genericmasslessstatesramond}
	\end{subfigure}
	\label{fig:massivevmasslesssutxsuthws}
	\caption{The $\su{2} \times \su{2}$ hws of a Ramond representation of \Ag{} for massive compared to massless representations.}
\end{figure}
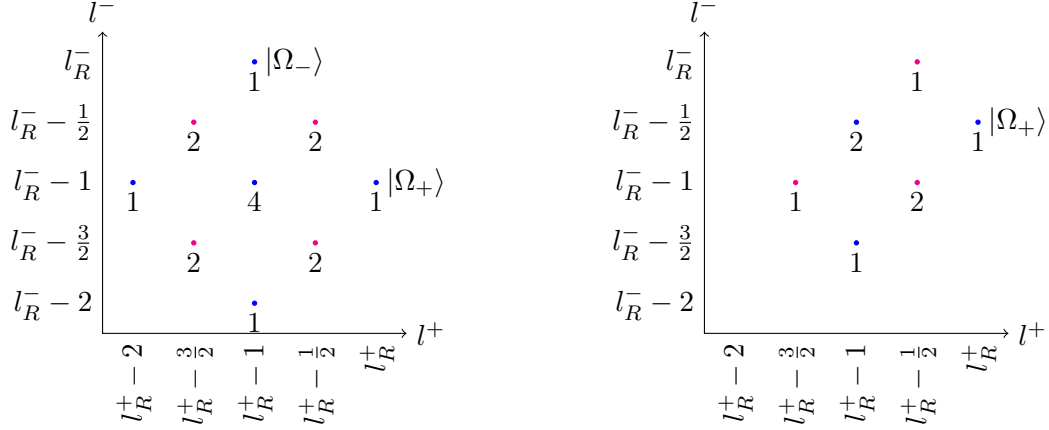

As shown in \cite{goddard1988factoring}, one can decouple the free fermionic fields as well as the bosonic $\widehat{\u{1}}$ current from the rest of the algebra, leaving a non-linear algebra known in the literature as \Agt{} containing an energy-momentum tensor $\widetilde{T}(z)$, four fields $\widetilde{G^a}(z)$ which have weight $\frac{3}{2}$ under the new energy-momentum tensor $\widetilde{T}(z)$ and six fields $\widetilde{T}^{\pm i}(z)$ which have weight 1 under $\widetilde{T}(z)$. The central charge $\widetilde{c}$ of $\widetilde{T}(z)$ is given by
\begin{equation}\label{eq:agtcentralcharge}
	\widetilde{c} = c -3,
\end{equation}
and the weight 1 fields $\tilde{T}^{\pm i}(z)$ form an $\sut[\widetilde{k}^+]{+} \times \sut[\widetilde{k}^-]{-}$ Kac-Moody subalgebra, where the levels are given by
\begin{equation}\label{eq:agtlevels}
	\tilde{k}^\pm = k^\pm - 1.
\end{equation}

We define the characters of a unitary highest weight module $V(c,h,l^+,l^-)$ of \Ag{} as~\cite{Petersen1990characters1},
\begin{equation}\label{eq:agcharacterastrace}
	\chi^{\Ag{},\{NS,R\}}(k^+,k^-,h,l^+,l^-;q,z_+,z_-) := \Tr_{V(c,h,l^+,l^-)}(q^{L_0 - c/24}z_+^{2T_0^{+3}}z_-^{2T_0^{-3}}),
\end{equation}
where as usual we have $q=e^{2\pi i \tau}$ and now we have two further variables corresponding to the two \sut[k^\pm]{\pm} charges defined as $z_\pm = e^{2 \pi i \omega_\pm}$ for $\omega_\pm \in \mathbb{C}$. Note that we have suppressed a possible dependence on the $U_0$ charge (this corresponds to setting the variable $\chi=1$ in \cite{Petersen1990characters1} equation 2.17). These character formulae for unitary highest weight representations of \Ag{} are most easily computed using this relation between \Ag{} and \Agt{} \cite{Petersen1990characters1,Petersen1990characters2}, where the character formulae for the two algebras are then related as
\begin{equation}\label{eq:agcharacterfactorisation}
	\Ch^{\Ag{},I}(h,l^I_\pm) = \Ch^{A_{QU},I} \times \Ch^{\Agt{},I}(h,\tilde{l}^I_\pm),
\end{equation}
where $I\in\{NS,R\}$ and $A_{QU}$ is the algebra of the four fermions and the $\widehat{\u{1}}$ generator that were removed from \Ag{} to obtain \Agt{}. We have \cite{Petersen1990characters1} $\widetilde{l}^{NS}_\pm = l^{NS}_\pm$ and $\widetilde{l}^{R}_\pm = l^{R}_\pm - \frac{1}{2}$ due to the fermionic zero modes in $\Aqu{}^R$. The quantum numbers in \cref{eq:agcharacterfactorisation} are therefore equal for the NS sector, but differ by $\frac{1}{2}$ for the R sector. Here, we give only the characters for the Ramond sector, as the Neveu-Schwarz characters can be obtained from the Ramond ones by spectral flow \cite{defever1988moding,Petersen1990characters1,Petersen1990characters2}. The character of \Aqu{} is then given by
\begin{equation}\label{eq:Aqucharacters}
	\Ch^{A_{QU},R}(u;q,z_\pm) = q^{u^2/k + 1/8}F^{R}(q,z_\pm)\prod_{n=1}^\infty (1-q^n)^{-1}(1+z_+^{-1}z_-^{-1})(z_++z_-),
\end{equation}
where
\begin{equation}\label{eq:fermioniccontributionstocharacter}
		F^R(q,z_\pm) := \prod_{n \in \mathbb{N}} (1+z_+z_-q^n)(1+z_+z_-^{-1}q^n)(1+z_+^{-1}z_-q^n)(1+z_+^{-1}z_-^{-1}q^n).
\end{equation}

Character formulae for irreducible representations of \Agt{}~\cite{Petersen1990characters2} are then given by
\begin{equation}\label{agtmassivetirreduciblecharacterR}
	\begin{aligned}
		\Ch&^{\Agt{},R}_{\text{Massive}}(\widetilde{k}^\pm,\widetilde{l}^\pm,h;q,z_\pm) = q^{h - c/24 + 1/8}F^{R}(q,z_\pm)B^{+-}(q,z_\pm) (z_+^{-1} + z_-^{-1}) \\
		&\times (1 + z_+^{-1}z_-^{-1})\prod_{n=1}^\infty(1-q^n)^{-1} \sum_{m,n=-\infty}^\infty q^{n^2 \widetilde{k}^+ + m^2 \widetilde{k}^- + 2n\widetilde{l}^+ + 2m\widetilde{l}^-}\\
		&\times \sum_{\epsilon_+,\epsilon_- \in \{\pm 1\}}\epsilon_+ \epsilon_- z_+^{2\epsilon_+(\widetilde{l}^+ +n\widetilde{k}^+)}z_-^{2\epsilon_-(\widetilde{l}^- +m\widetilde{k}^-)},
	\end{aligned}
\end{equation}
\begin{equation}\label{agtmasslesstirreduciblecharacterR}
	\begin{aligned}
		\Ch&^{\Agt{},R}_{\text{Massless}}(\widetilde{k}^\pm,\widetilde{l}^\pm;q,z_\pm) = q^{h - c/24 + 1/8}F^{R}(q,z_\pm)B^{+-}(q,z_\pm)(z_+^{-1} + z_-^{-1}) \\
		&\times (1 + z_+^{-1}z_-^{-1})\prod_{n=1}^\infty(1-q^n)^{-1} \sum_{m,n=-\infty}^\infty q^{n^2 \widetilde{k}^+ + m^2 \widetilde{k}^- + 2n\widetilde{l}^+ + 2m\widetilde{l}^-} \\
		&\times \sum_{\epsilon_+,\epsilon_- \in \{\pm 1\}}\epsilon_+ \epsilon_- z_+^{2\epsilon_+(\widetilde{l}^+ +n\widetilde{k}^+)}z_-^{2\epsilon_-(\widetilde{l}^- +m\widetilde{k}^-)}(z_+^{-\epsilon_+}q^{-n} + z_-^{-\epsilon_-}q^{-m})^{-1},
	\end{aligned}
\end{equation}
where
\begin{equation}\label{eq:su2bosoniccontributiontoagcharacter}
	B^{+-}(q,z_\pm) := \prod_{n=1}^\infty \prod_{z\in \{ z_+,z_- \} } \left((1-z^2q^n)(1-z^{-2}q^{n-1})\right)^{-1}(1-q^n)^{-2},
\end{equation}
and $F^R$ is as defined in \cref{eq:fermioniccontributionstocharacter}.

The characters for \Ag{} are then given by multiplying the above expressions by \cref{eq:Aqucharacters} depending on the relevant sector. As noted above, the NS characters can be obtained from the R ones if required.


\section{Branching $SU(M+N) \to SU(M) \otimes SU(N) $} 
\label{sec:branching_su_m_n_to_su_m_times_su_n}
The method for computing the branching of $SU(M|N) \to SU(M) \otimes SU(N) \otimes U(1)$ follows closely to that of the branching $SU(M + N) \to SU(M) \otimes SU(N)$. Here, we describe the process for calculating the branching of a representation of $SU(M+N) \to SU(M) \otimes SU(N)$ as in \cite{itzykson1966unitary}. Given an irreducible representation of $SU(M)$, $(\Gamma_1,V_1)$ described by a Young Tableau $T_1$ and an irreducible representation of $SU(N)$, $(\Gamma_2,V_2)$ described by a second Young tableau $T_2$, then the representation $(\Gamma_1 \otimes \Gamma_2, V_1 \otimes V_2)$ of dimension $\dim(V_1)\dim(V_2)$ appears in the decomposition of an irreducible representation, $(\Omega, W)$ with multiplicity equal to the multiplicity of $(\Omega, W)$ in the decomposition of the tensor product of $T_1$ and $T_2$ now treated as representations of $SU(M+N)$. This is demonstrated in the following example.
 
 \begin{example}\label{eg:FrobeniusReciprocity}
 	Consider the representation of $SU(3)$ described by
	\begin{equation*}
		\yt{{2,1}}
	\end{equation*}
	which has dimension 8, and the representation of $SU(4)$ described by
	\begin{equation*}
		\yt{{2}}
	\end{equation*}
	which has dimension 10. We want to check whether the 40-dimensional representation of $SU(3) \otimes SU(4)$ described by
	\begin{equation*}
		\left(\yt{{2,1}},\yt{{2}}\right)
	\end{equation*}
	appears in the decomposition of the 882-dimensional representation of $SU(7)$ described by
	\begin{equation*}
		\yt{{3,2}}.
	\end{equation*}
	We therefore want to calculate the Clebsch-Gordan decomposition of
	\begin{equation*}
		\yt{{2,1}} \times \yt{{2}},
	\end{equation*}
	where now the tableaux are understood to refer to representations of $SU(7)$. As is well known, this decomposition can easily be found using the Littlewood-Richardson rule. In this example this gives the result,
	\begin{equation}
		\begin{array}{ccccccccccc}
			\yt{{2,1}} & \times & \yt{{2}} & = & \yt{{4,1}} & + & \yt{{3,2}} & + & \yt{{3,1,1}} & + & \yt{{2,2,1}}, \\
			\underline{112} & \times & \underline{28} & = & \underline{1008} & + & \underline{882} & + & \underline{756} & + & \underline{490},
		\end{array}
	\end{equation}
	where the dimension of each representation is shown underneath the corresponding tableau.
	
	From this calculation we conclude that the representation
	\begin{equation*}
		\left( \yt{{2,1}}, \yt{{2}} \right)
	\end{equation*}
	of $SU(3) \times SU(4)$ appears with multiplicity 1 in the decomposition of the $SU(7)$ representation
	\begin{equation*}
		\yt{{3,2}}.
	\end{equation*}
	
	To fully calculate the branching from $SU(7)$ to $SU(3) \times SU(4)$, we therefore need to check which other representations of $SU(3) \times SU(4)$ contain the representation
	\begin{equation*}
		\yt{{3,2}}
	\end{equation*}
	of $SU(7)$ in their Clebsch-Gordan decomposition (when treated as tableaux of $SU(7)$). Note that since we treat the tableaux of both $SU(3)$ and $SU(4)$ a tableaux of $SU(7)$, then on the level of the tableaux the decomposition must be symmetric with respect to the factors, as the tensor product is symmetric. However after appropriately symmeterising the tableaux, one must still simplify the tableau such that no columns are of length greater than $N$ for a tableau of $SU(N)$. The full decomposition is then 
	\begin{equation}\label{eq:yt32branching}
		\begin{array}{ccccc}
				\yt{{3,2}} & \mapsto & \left( \yt{{3,2}}, 1 \right) & + & \left( 1, \yt{{3,2}} \right) \\
				 & \qquad + & \left( \yt{{2,2}}, \yt{{1}} \right) & + & \left( \yt{{1}}, \yt{{2,2}} \right) \\
				 & \qquad + & \left( \yt{{3,1}}, \yt{{1}} \right) & + & \left( \yt{{1}}, \yt{{3,1}} \right) \\
				 & \qquad + & \left( \yt{{3}}, \yt{{2}} \right) & + & \left( \yt{{2}}, \yt{{3}} \right) \\
				 & \qquad + & \left( \yt{{2,1}}, \yt{{1,1}} \right) & + & \left( \yt{{1,1}}, \yt{{2,1}} \right) \\
				 & \qquad + & \left( \yt{{2,1}}, \yt{{2}} \right) & + & \left( \yt{{2}}, \yt{{2,1}} \right),
			\end{array}
	\end{equation}
	where $1$ denotes the singlet representation -- the empty tableau. In terms of the dimensions of the various representations this is
	\begin{equation}
			\underline{882}  \mapsto  \underline{15} + \underline{60} + \underline{24} + \underline{60}  + \underline{60} + \underline{135}  + \underline{100} + \underline{120} + \underline{48} + \underline{60}  + \underline{80} + \underline{120},
	\end{equation}
	where the order of the representations has been kept the same as the tableaux in the previous equation.
 \end{example}


\section*{Acknowledgements} 
\label{sec:acknowledgements}
I thank Anne Taormina for her continued support and guidance and Peter Bowcock for enlightening conversations. I would also like to thank Jos\'e Figueroa O'Farrill for comments on an earlier version of this paper. Finally, I would like to thank the Faculty of Science at Durham University who have funded this work.

\printbibliography[heading=bibintoc]

\end{document}